\newtheorem{thm}{Theorem}[section]
\newtheorem{lem}[thm]{Lemma}
\newtheorem{prop}[thm]{Proposition}
\theoremstyle{definition}
\newtheorem{defn}[thm]{Definition}
\newtheorem{ass}[thm]{Assumption}
\theoremstyle{remark}
\newtheorem{rem}[thm]{Remark}
\newtheorem{exa}[thm]{Example}
\numberwithin{equation}{section}
\newcommand{\cB}{\mathcal{B}}
\newcommand{\cD}{\mathcal{D}}
\newcommand{\cE}{\mathcal{E}}
\newcommand{\cF}{\mathcal{F}}
\newcommand{\cM}{\mathcal{M}}
\newcommand{\cP}{\mathcal{P}}
\newcommand{\cV}{\mathcal{V}}
\newcommand{\cX}{\mathcal{X}}
\newcommand{\cY}{\mathcal{Y}}
\newcommand{\bE}{\mathbb{E}}
\newcommand{\bL}{\mathbb{L}}
\newcommand{\bN}{\mathbb{N}}
\newcommand{\bP}{\mathbb{P}}
\newcommand{\bQ}{\mathbb{Q}}
\newcommand{\bR}{\mathbb{R}}
\newcommand{\e}{\varepsilon}
\newcommand{\Pas}{{\mathbb{P}\text{-a.s.}}}
\newcommand{\dbracc}[1]{[\kern-0.15em[ #1 ]\kern-0.15em]}
\newcommand{\dbraco}[1]{[\kern-0.15em[ #1 [\kern-0.15em[}
\newcommand{\dbraoc}[1]{]\kern-0.15em] #1 ]\kern-0.15em]}
\newcommand{\dbraoo}[1]{]\kern-0.15em] #1 [\kern-0.15em[}
\newcommand{\be}{\begin{equation}}
\newcommand{\ee}{\end{equation}}
\newcommand{\nn}{\nonumber}
\newcommand{\bs}{\begin{split}}
\newcommand{\es}{\end{split}}
\newcommand{\ba}{\begin{aligned}}
\newcommand{\ea}{\end{aligned}}
\renewcommand{\[}{\left[}
\renewcommand{\]}{\right]}
\renewcommand{\(}{\left(}
\renewcommand{\)}{\right)}
\newcommand{\tb}[1]{\textcolor{blue}{#1}}
\newcommand{\tr}[1]{\textcolor{red}{#1}}
\numberwithin{equation}{section}
\renewcommand{\[}{\left[}
\renewcommand{\]}{\right]}
\renewcommand{\(}{\left(}
\renewcommand{\)}{\right)}
\begin{document}

%\title[Utility-based pricing in   large   markets]{Utility-based pricing in   large  markets}%
%\title[The value of the weak information]{Stability of indifference pricing \st{with respect to perturbations of preferences}}
%\title[The weak information premium]{The weak information premium: \tb{stability and information invariance}}

\title[The indifference value of the weak information]{The indifference value of the weak information}
\author{Fabrice Baudoin} 
\address{Fabrice Baudoin, Department of Mathematics, Aarhus University, Aarhus, DK, 8000}%
\email{fbaudoin@math.au.dk}%

\author{Oleksii Mostovyi}\thanks{Fabrice Baudoin was partially funded by the National Science Foundation under grant No. DMS-2247117 when most of the work was completed. Oleksii Mostovyi has been supported by %the National Science Foundation under the NSF CAREER award DMS-1848339
 the National Science Foundation under grant No. DMS-1848339.
 % (2019 - 2024)
 Any opinions, findings, and conclusions or recommendations expressed in this material are those of the authors and do not necessarily reflect the views of the National Science Foundation.}
\address{Oleksii Mostovyi, University of Connecticut, Department of Mathematics, Storrs, CT 06269, US}%
\email{oleksii.mostovyi@uconn.edu}%

\subjclass[2020]{93E20, 91G10, 91G15, 60H30, 60H05. \textit{JEL Classification:} C61, G11, G12.}
\keywords{  indifference pricing,  stability, incomplete market}%

\date{\today}%

%----------------------------------------------------------------

\maketitle

\begin{abstract} 
We propose indifference pricing to estimate the value of the weak information. Our framework allows for tractability, quantifying the amount of additional information, and permits the description of the smallness and the stability with respect to small perturbations of the weak information.  
%We consider the problem of stability of indifference pricing under small perturbations of information. Adopting the weak information approach from \cite{Fabrice03}, 
We provide {sharp} conditions for the stability with counterexamples. %Assuming a particular form of perturbations, we obtain the first-order asymptotic formulas for the corrections of the indifference pricing.
 The results rely on a theorem of independent interest on the stability of the optimal investment problem with respect to small changes in the physical probability measure. We also investigate contingent claims that are indifference price invariant with respect to changes in weak information. We show that, in incomplete models, the class of information-invariant claims includes the replicable claims, and it can be strictly bigger. In particular, in complete models, all contingent claims are information invariant. We augment the results with examples and counterexamples.  
\end{abstract}  
%%%%%%%%%%%%%%%%%%%%%%%%%%%%%%%%%%%%%%%%%%%%
 \section{Introduction}

  Asymmetry of information is a very active area of mathematical finance and related areas. Going back to \cite{Marschak59}, \cite{ArrowInf}, \cite{Gould74}, \cite{Kyle}, and \cite{BackIT}, among others, the topic resulted in numerous results on  the asymmetry of information and related subjects.
One of the mainstream approaches relies on enlargements of filtrations, where the mathematical foundations have been largely developed by the French school of probability.  
 The financial applications of this theory, known as strong information modeling in the terminology of \cite{Fabrice03}, have propagated to a range of topics in mathematical finance, including arbitrage theory, pricing and hedging, characterizations equilibria, optimal investment, and others. We refer to   \cite{KarPik}, \cite{MartinII2}, \cite{higa}, \cite{imkellerAsym},  \cite{CampiII}, %\cite{MansuyYor},
  \cite{ImkellerAI},\cite{umut1}, \cite{KostasInformation}, \cite{Michael1}, \cite{HaoIT}, \cite{aksamit1},  \cite{umut2}, \cite{kosBeaFont16},  
 \cite{moniqueAksamit},  \cite{aksamit4}, \cite{philip1}, \cite{aksamit3},\cite{aksamit2},  \cite{PaoloLP}, \cite{philip2}, \cite{claudioIA}, \cite{ScottAI1}, \cite{umut3}, \cite{ScottAI2}, \cite{Choi2023}, and \cite{Shi2024} for an incomplete list of references on these subjects.

In the present work, we propose to use \begin{enumerate}[(i)]
%\vspace{-3mm}
\item {\bf  indifference pricing} in the context of 
\item {\bf  the weak information approach}  from \cite{Fabrice03} 
\end{enumerate}to quantify the value of information. 
For $\rm{(i)}$, in contrast, e.g., to the changes in the value function with and without extra information that is used in a \cite{MartinII}, \cite{Fabrice03}, \cite{imkellerAsym}, \cite{philip1}, \cite{claudioIA}, and \cite{philip2},  among others, using indifference pricing allows to assign different value of information between different securities. In particular, {\it there are some contingent claims that are not affected by the additional information.}

For $\rm{(ii)}$, 
in contrast to the strong information modeling, the weak information approach does not require changing the filtration but relies on alterations of the physical probability measure. It permits the recovery of many results from the {initial} enlargements of filtrations and beyond, see \cite{Fabrice02} and \cite{Fabrice03}, and allows for multiple desired features of the value of information problem. For example, smallness or even continuity with respect to changes in the information can be naturally discussed in the context of our framework. This is in contrast to the approach based on the enlargement of filtrations, see, e.g., deep yet technically involved \cite{KostasInformation}, where continuity with respect to small changes in the filtration is analyzed. Here, we note that continuous behavior with respect to small changes in the initial data is a part of the well-posedness of a problem in the sense of Hadamard.

In addition to proposing the framework via $\rm{(i)}$ and $\rm{(ii)}$, our contributions also include 
\begin{enumerate}[(i)]
\setcounter{enumi}{2}
\item {\bf sharp stability results} (with counterexamples in Section \ref{secCounter}) with respect to small perturbations of the physical measure,
\item {\bf characterizations of indifference price invariant contingent claims} under changes of the weak information.   
\end{enumerate}
The results in $\rm{(iii)}$ are of independent interest, and they are based on technical proofs of the convergence of the value function, optimizers, and indifference prices under small perturbations of the physical probability measure. Here, we introduce the integrability conditions and, under these conditions, prove convergence of the value functions, their optimizers, and indifference prices under small perturbations of physical probability measure. We augment this analysis with counterexamples.%, which given via the uniform integrability of the appropriate families positive part of the conjugate stochastic filed composed with a sequence of the Radon-Nikodym derivatives. 

For  $\rm{(iv)}$, we show that the class of indifference price invariant contingent claims includes all bounded claims in complete markets. In incomplete markets, we prove that the set of indifference prices invariant claims includes all bounded replicable contingent claims. Depending on a particular mode, this set can equal the set of replicable claims, or it can be strictly bigger and might include even all bounded contingent claims. We illustrate these assertions with positive examples.

   The remainder of this paper is organized as follows: in Section \ref{secModel}, we formulate the model and state the stability results, whose proofs are given in Section \ref{secProofs}. In Section \ref{secInv}, we discuss the price invariant contingent claims in both complete and incomplete markets, establish a connection to the framework in \cite{Fabrice03}, and provide positive examples. We conclude the paper with Section \ref{secCounter}, which contains counterexamples.

 \section%{Stability of optimal investment with respect to perturbations of the probability measure}
 {The model and the stability results}\label{secModel}

Let us consider a  probability space $(\Omega,\cF,\bP)$, equipped with the filtration $\(\cF_t\)_{t\in[0,T]}$, where $T \in(0,\infty)$ is the time horizon, $\cF$ satisfies the usual conditions, and $\cF_0$ is a trivial $\sigma$-algebra. Let $S$ be a $d$-dimensional semimartingale denoting the price process of the risky assets. We suppose that there is also a riskless asset whose price process equals $1$ at all times. Let us define the family of nonnegative self-financing wealth processes as  
\be\label{defX}\begin{split}
\cX (x) := \left\{X \geq 0 :~ \right.&X = x + H\cdot S,~where~H~is~predictable\\
&\left.and~S\text-integrable\right\},\	\quad x\geq 0.
\end{split}
\ee
We remark that %commonly in the literature, see e.g., \cite{DSmathArb} that 
the wealth processes of the form $X+x_0$, where $x_0\in \bR$ and $X\in\bigcup\limits_{x\geq 0}\cX(x)$, are called {\it admissible}\footnote{This notion is used, in particular, in Definition \ref{defRepl} below to describe bounded replicable contingent claims.}.

 A {\it stochastic utility field} is a mapping $U=U(\omega, x):~\Omega\times[0,\infty)\to \bR\cup\{-\infty\}$ satisfying the following assumption.
\begin{ass}\label{Inada}
For every $\omega\in\Omega$, $U(\omega, \cdot)$ is a strictly increasing, strictly concave, continuously differentiable on $(0,\infty)$ function that satisfies the Inada conditions at $0$ and $\infty$; %At $x=0$, we suppose by continuity that $U(\omega, 0) = \lim\limits_{x\downarrow 0}U(\omega, x)$, this value may be $-\infty$; 
for every $x\geq 0$, $U(\cdot, x)$ is measurable. At $0$, we suppose by continuity that $U(\cdot, 0) = \lim\limits_{x\downarrow 0 }U(\cdot, x)$; this value might be $-\infty$.  
\end{ass}

 \subsection{Weak anticipation}\label{secWeakAntip}  Let $\cP$ be a Polish space, e.g., $\cP = \bR^n$, $\cP = C\(\bR_{+}, \bR^n\)$, for some $n\in\bN$, etc., endowed with its Borel $\sigma$-algebra $\cB(\cP)$. Next, let $\sf Y:\Omega\to \cP$ be an $\cF$-measurable random element. In the simplest case, $\sf Y$ can be the value of the stock price at time $T$.

Let us consider an insider who is weakly informed of  $\sf Y$, that is, he or she has knowledge of the filtration $\(\cF_t\)_{t\in[0,T]}$ and of the {\it law} of $\sf Y$. So, we associate a probability measure $\nu$ on $\(\cP, \cB(\cP)\)$. We assume that $\nu$ is equivalent to $\bP$ and note that in \cite{Fabrice03}, more conditions are imposed. Following \cite[Definition 7]{Fabrice03}, one can set a probability measure $\bP^\nu$ on $(\Omega, \cF)$ via 
$$\bP^\nu(A) = \int_{\cP} \bP[A|\sf Y = y]\nu(dy),\quad A\in\cF,$$
which is called, in \cite{Fabrice03}, the minimal probability measure associated with the weak information $(\sf Y, \nu)$. 
 %As demonstrated in \cite{Fabrice02} one can recover properties typical to enlargements of the filtration via conditioning. %This includes, for example, the semimartingale decomposition under the minimal probability measure.

 Multiple results for the initial enlargement of filtration can be recovered from the weak anticipation approach. These include the semimartingale decomposition in the Jacod theorem - the key component for quantifying the value of information and results on stochastic differential equations, among others. %, and even beyond, for example, connections with the theory of Schr\"odinger processes. 
We refer to \cite{Fabrice02} and \cite{Fabrice03} for more details, where additionally dynamic conditioning, modeling weak information flow, and even connections to the theory of Schr\"odinger processes are developed. 
 
 \subsection{Our formulation} The weak anticipation approach allows us to consider small perturbations of information by supposing that $\bP^\nu$ is close in some sense to $\bP$. In turn, this leads to the concepts of continuity and stability of various problems with respect to small changes in information that can be modeled via alterations of the physical probability measure. 
 
 Therefore, we can consider a sequence of equivalent probability measures on one fixed probability space, converging to a physical probability measure $\bP$, and see how various problems related to indifference pricing respond. For the notion of convergence of probability measures on a fixed space, we propose the total variation norm. %In turn,  the metrizability of the space of probability measures in the total variation norm allows to work with sequences without loss of generality. 
 These considerations are represented by the following assumption.
  \begin{ass}\label{asZconv}
 Let us consider a sequence of probability measures $\bP^n$, $n\in\bN$, equivalent to $\bP$, and let  $Z^n_T := \frac {d\bP^n}{d\bP}$,  $n\in\bN$,  be the sequence of the corresponding   Radon-Nikodym derivatives.
We suppose that 
$$\bP\text-\lim\limits_{n\to\infty}Z^n_T  = 1 =:Z^\infty_T.$$

\end{ass}
\begin{rem}\label{remTVconv}
Assumption \ref{asZconv} by means of Scheffe's lemma implies that 
$$\bL^1(\bP)\text-\lim\limits_{n\to\infty}Z^n_T=  1.$$
%Therefore, the convergence in Assumption \ref{asZconv} is equivalent to the convergence in the total variation norm. 
We refer to \cite[Section V.4]{JS} for characterizations of such a convergence. 
\end{rem}
Let us  denote $\bN^*:= \bN\cup{\{\infty\}}$, $\bP^\infty:=\bP$, and consider the following problems.
\be\label{primalProblem}
u_n(x) := \sup\limits_{X\in\cX(x)}\bE_{\bP^n}\[U(X_T)\],\quad (x, n)\in(0,\infty)\times \bN^*.
\ee
Here, we use the convention 
\be\label{convu}
\bE_{\bP^n}\[U(X_T)\]:= -\infty\quad {\rm if}\quad \bE_{\bP^n}\[U^-(X_T)\] = \infty.
\ee
%In order for \eqref{primalProblem} to be non-degenerate, 
Let us further set 
\be\label{defY}\bs
\cY^n(y): = \left\{ \right.&Y\geq 0:~ Y_0\leq y~~and~~ XY = (X_tY_t)_{t\in[0,T]}~~is~a \\
&\left.\bP^n\text-supermartingale~for~every~X\in\cX(1)\right\}, \quad (y, n)\in[0,\infty)\times \bN^*,
\end{split}
\ee
where, for every $y\geq 0$, below we denote $\cY^\infty(y)$ by $\cY(y)$ %and $\cY^n(1)$ by $\cY^n$, 
for brevity. 
$$V(\omega, y):=\sup\limits_{x>0}\(U(\omega, x) - xy\), \quad (\omega, y)\in \Omega\times[0,\infty),$$
and 
\be\label{dualProblem}
v_n(y):= \inf\limits_{Y\in\cY^n(y)}\bE_{\bP^n}\[V(Y_T)\], \quad (y, n)\in(0,\infty)\times \bN^*,
\ee
where we use the convention
\be\label{convv}
\bE_{\bP^n}\[ V(Y_T)\] = \infty\quad {\rm if}\quad
\bE_{\bP^n}\[V^+(Y_T)\] = \infty.
\ee
To ensure that the base dual problem, that is, the one corresponding to $n=\infty$, is non-degenerate, let us suppose that 
\be\tag{noArb}\label{noArb}
\cY(1)~{\rm contains~a~strictly~positive~element}.
\ee
This is the celebrated condition of no-unbounded profit with bounded risk (NUBPR). Further, let us impose the following assumptions.

\begin{ass}\label{asUI1}
For every $x>0$, there exists $X\in\cX(x)$, such that 
$Z^n_TU^{-}(X_T)$, $n\in\bN$, is uniformly integrable. 
\end{ass}
\begin{rem}\label{remDetU}
If $U$ is deterministic, Assumption \ref{asUI1} holds trivially under Assumption \ref{asZconv}. 
\end{rem}
\begin{ass}\label{asUI}
For every $y>0$, there exists $Y\in\cY(y)$, such that 
$Z^n_TV^{+}\(\frac{Y_T}{Z^n_T}\)$, $n\in\bN$, is uniformly integrable. %\tr{This assumption needs to be changed if $V$ is negative valued to ensure that Assumption \ref{asZconv} holds.}
\end{ass}
\begin{rem}
%Assumption \ref{asUI} replaces two seemingly natural assumptions below.
Let us consider the following condition 
\be\label{finValue}
  u_n(z) >-\infty  \quad and \quad 
  v_n(z)  <\infty, \quad z>0,\quad n\in\bN^*,
\ee 
% \eqref{finValue}
 which  implies that for every $y>0$, there exists $Y^n\in\cY(y)$, $n\in\bN$, such that 
$Z^n_TV^{+}\(\frac{Y^n_T}{Z^n_T}\)$, $n\in\bN^*$, is a subset of $\bL^1(\bP)$. Assumption \ref{asUI} is closely related  yet stronger. %[\tr{possibly add a remark about perspective mapping}]
%For $V^+$ being non-zero, Assumption \ref{asUI} also implies Assumption \ref{asZconv}. 

\end{rem}
%\begin{rem}\tb{delete}
%If RRA of $V$  asymptotically behaves like the one of the power utility, then, in the case of parametrization $Z^n = \cE(\e_n \lambda\cdot M)$, sufficient conditions for Assumption \ref{asUI},  for $Z^n$, $n\geq n_0$, are given by the existence of $a>0$, such that
%$$\bE[\exp\( a\left|\lambda\cdot M_T\right|\)]<\infty\quad {\rm
%or}\quad  
%\bE[\exp\( a\langle\lambda\cdot M\rangle_T\)]<\infty.$$
%\end{rem}

 %%%%%%%%%%%%%%%%%%%%%%%%%%%%%%%%%%%%%
 Next, we state our convergence results. 
 %Next we state our results for the convergence of the value functions in \eqref{primalProblem} and \eqref{dualProblem}, their optimizers, and the indifference prices.  
 \subsection{Convergence of the value functions}
  %%%%%%%%%%%%%%%%%%%%%%%%%%%%%%%%%%%%%
\begin{prop}\label{propconvuv}
Let us consider problems \eqref{primalProblem} and \eqref{dualProblem} under Assumptions \ref{Inada}, \ref{asZconv}, \ref{asUI1}, \ref{asUI}, and \eqref{noArb}. Then, we have 
\begin{enumerate}
\item condition \eqref{finValue} holds, and for every $n\in\bN^*$,  
$u_n$ and $v_n$ are finite-valued on $(0,\infty)$. 
\item the value functions converge pointwise, that is 
\be\label{convuv}
\lim\limits_{n\to\infty}u_n(z) = u_{\infty}(z)\quad and \quad 
\lim\limits_{n\to\infty}v_n(z) =  v_{\infty}(z),\quad z>0.
\ee
\end{enumerate}
\end{prop}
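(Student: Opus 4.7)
The plan is to reduce both problems to the fixed reference measure $\bP$ and then derive finiteness and convergence from Fenchel duality combined with the two uniform-integrability assumptions. The standard supermartingale change-of-measure identifies $\cY^n(y) = \{\tilde Y/Z^n : \tilde Y \in \cY(y)\}$ (since, writing $Z^n_t = \bE[Z^n_T|\cF_t]$, a process $M$ is a $\bP^n$-supermartingale iff $Z^n M$ is a $\bP$-supermartingale), so that
\[
u_n(x) = \sup_{X\in\cX(x)} \bE\!\left[Z^n_T U(X_T)\right], \qquad v_n(y) = \inf_{\tilde Y\in\cY(y)} \bE\!\left[Z^n_T V\!\left(\tfrac{\tilde Y_T}{Z^n_T}\right)\right].
\]
Scheff\'e's lemma (Remark~\ref{remTVconv}) yields $Z^n_T \to 1$ in $\bL^1(\bP)$, hence uniform integrability of the density sequence.

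For part (1), I would fix $X^*\in\cX(x)$ from Assumption~\ref{asUI1} and $Y^*\in\cY(y)$ from Assumption~\ref{asUI}. The Fenchel inequalities $U(a)\le V(b)+ab$ and $V(b)\ge U(a)-ab$, applied at $a=X_T$, $b=Y^*_T/Z^n_T$ (resp.\ $a=X^*_T$, $b=\tilde Y_T/Z^n_T$), integrated against $Z^n_T$ and combined with the $\bP$-supermartingale bound $\bE[X_T Y^*_T]\le xy$, produce the uniform-in-$n$ one-sided bounds $u_n(x)\le C_1 + xy$ and $v_n(y)\ge -C_2-xy$. The complementary bounds $u_n(x)\ge \bE[Z^n_T U(X^*_T)]>-\infty$ and $v_n(y)\le \bE[Z^n_T V(Y^*_T/Z^n_T)]<\infty$ follow by direct evaluation at the assumption-supplied point, completing part (1).

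For part (2), the same Fenchel argument has a crucial byproduct: $\{Z^n_T U^+(X_T)\}_n$ is UI for \emph{every} fixed $X\in\cX(x)$ (being dominated by $Z^n_T V^+(Y^*_T/Z^n_T)+X_T Y^*_T$), and symmetrically $\{Z^n_T V^-(\tilde Y_T/Z^n_T)\}_n$ is UI for every $\tilde Y\in\cY(y)$. Combined with $Z^n_T \to 1$ in probability, this gives $\bL^1(\bP)$-convergence of $Z^n_T U(X^*_T)$ and $Z^n_T V(Y^*_T/Z^n_T)$, yielding $\liminf_n u_n(x) \ge \bE[U(X^*_T)]$ and $\limsup_n v_n(y) \le \bE[V(Y^*_T)]$. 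To upgrade from $X^*,Y^*$ to near-optimizers, I would take $X^{(k)}\in\cX(x)$ with $\bE[U(X^{(k)}_T)]\to u_\infty(x)$ and form the convex perturbation $X^{(k),\alpha}:=(1-\alpha) X^{(k)} + \alpha X^* \in \cX(x)$: the lower bound $X^{(k),\alpha}\ge \alpha X^*$ together with monotonicity of $U$ gives $U^-(X^{(k),\alpha}_T)\le U^-(\alpha X^*_T)$, and a deterministic scaling estimate on $U^-$ transfers the $n$-uniform UI of $\{Z^n_T U^-(X^*_T)\}_n$ (Assumption~\ref{asUI1}) to $\{Z^n_T U^-(X^{(k),\alpha}_T)\}_n$ independently of $k$. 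Passing to the limit in $n$, then $\alpha\downarrow 0$, then $k\to\infty$ yields $\liminf_n u_n(x)\ge u_\infty(x)$, and the dual argument is symmetric. The matching directions $\limsup_n u_n(x)\le u_\infty(x)$ and $\liminf_n v_n(y)\ge v_\infty(y)$ are obtained from the Kramkov--Schachermayer biconjugation $u_n(x)=\inf_{y>0}(v_n(y)+xy)$ -- valid for every $n\in\bN^*$ by part (1) and \eqref{noArb} -- combined with the standard stability of Fenchel conjugates under pointwise convergence of finite-valued convex functions on $(0,\infty)$.

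The main obstacle is the scaling estimate on $U^-$ near the origin (and its dual counterpart on $V^+$ near infinity) used in the final UI transfer. For utilities with $U(0^+)>-\infty$ it is an immediate consequence of concavity, but for utilities with $U(0^+)=-\infty$ it requires a more careful argument using the Inada conditions (and, if needed, a reasonable-asymptotic-elasticity-type bound that is already implicit in the finiteness established in part (1)). Apart from this single pointwise estimate, the proof is a standard uniform-integrability-plus-convergence-in-probability argument, spliced with the classical duality machinery.
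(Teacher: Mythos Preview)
Your overall strategy closely mirrors the paper's: the same change-of-measure identification of $\cY^n$, the same Fenchel-inequality route to part~(1), and the same conjugacy sandwich $u_\infty(x)\le\liminf_n u_n(x)\le\limsup_n v_n(y)+xy\le v_\infty(y)+xy=u_\infty(x)$ for the missing directions in part~(2). The one substantive difference is in how you obtain $\liminf_n u_n(x)\ge u_\infty(x)$ (and its dual), and here your acknowledged obstacle is a real gap.

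The scaling estimate you need --- that uniform integrability of $\{Z^n_T U^-(X^*_T)\}_n$ transfers to $\{Z^n_T U^-(\alpha X^*_T)\}_n$ for fixed $\alpha\in(0,1)$ --- amounts to a pointwise bound of the form $U^-(\omega,\alpha x)\le C(\alpha)\,U^-(\omega,x)+D(\alpha)$. This is \emph{not} a consequence of the Inada conditions: a utility whose behaviour near the origin mimics $U(x)\approx -\exp(1/x)$ satisfies Inada but violates any such bound, and nothing in the finiteness of $u_n,v_n$ from part~(1) rules this out (asymptotic-elasticity controls $U$ at $+\infty$, not at $0$; and for stochastic $U$ the constants could in any case depend on $\omega$ in an unbounded way). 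So the proposed patch does not close the gap.

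The paper sidesteps this entirely by an \emph{additive} rather than multiplicative perturbation. Since Assumption~\ref{asUI1} holds for \emph{every} initial wealth, one may pick $X\in\cX(\varepsilon)$ with $\{Z^n_T U^-(X_T)\}_n$ uniformly integrable, and then $X+\widehat X^\infty(x-\varepsilon)\in\cX(x)$ satisfies $U^-\bigl(X_T+\widehat X^\infty_T(x-\varepsilon)\bigr)\le U^-(X_T)$ by monotonicity --- no scaling needed. Fatou gives $\liminf_n u_n(x)\ge u_\infty(x-\varepsilon)$, and one lets $\varepsilon\downarrow 0$ using continuity of $u_\infty$. The dual direction is symmetric with $\varepsilon\tilde Y+\widehat Y^\infty(y-\varepsilon)$. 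Your convex-combination idea can be repaired in exactly this spirit: instead of taking $X^*\in\cX(x)$ and scaling by $\alpha$, invoke Assumption~\ref{asUI1} directly at $\alpha x$ to obtain $X^*\in\cX(\alpha x)$, and use $(1-\alpha)X^{(k)}+X^*\in\cX(x)$.
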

 %%%%%%%%%%%%%%%%%%%%%%%%%%%%%%%%%%%%%

\subsection{Convergence of the optimizers}
\begin{prop}\label{lemConvOpt}

Let $(z_n)_{n\in\bN}$ be a sequence of strictly positive numbers converging to $z_\infty>0$. Then, 
under the conditions of Proposition \ref{propconvuv},  for every $(z_n, n)$, $n \in\bN^*$, the optimizers to \eqref{primalProblem}, denoted by $\widehat X^{n}(z_n)$, and \eqref{dualProblem}, denoted by $\widehat Y^n(z_n)$,  exist, are unique, and satisfy
\be\label{convOpt}
 \widehat X^\infty_T(z_\infty) = \bP{\text-}\lim\limits_{n\to\infty}\widehat X^{n}_T(z_n)\quad and\quad 
 \widehat Y^\infty_T(z_\infty) = \bP{\text-}\lim\limits_{n\to\infty}\widehat Y^n_T(z_n). 
\ee
 
\end{prop}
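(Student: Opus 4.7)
The plan is to combine existence-and-uniqueness of optimizers from random-utility maximization with a Komlós-type compactness argument in the reference measure $\bP$, followed by a uniqueness-of-limit argument. First, Proposition \ref{propconvuv} provides finiteness of $u_n$ and $v_n$ on $(0,\infty)$ for every $n \in \bN^*$; combined with Assumption \ref{Inada} and \eqref{noArb}, this places each problem within the scope of random-utility maximization duality (the random-utility extensions of Kramkov--Schachermayer due to \v{Z}itkovi\'c and Mostovyi), so the primal and dual optimizers $\widehat X^n(z_n)$ and $\widehat Y^n(z_n)$ exist, and by strict concavity of $U$ and strict convexity of $V$ their terminal values are $\bP$-a.s.\ unique.

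Next I would eliminate the $n$-dependence of the feasible sets by passing to $\bP$. Writing $Z^n_t := \bE_\bP[Z^n_T \mid \cF_t]$, a direct check based on Bayes' rule shows that $Y \in \cY^n(y)$ if and only if $Z^n Y \in \cY(y)$, so setting $\widetilde Y^n := Z^n \widehat Y^n(z_n) \in \cY(z_n)$ rewrites
$$u_n(x) = \sup_{X \in \cX(x)} \bE_\bP\bigl[Z^n_T U(X_T)\bigr], \qquad v_n(y) = \inf_{Y \in \cY(y)} \bE_\bP\bigl[Z^n_T V(Y_T/Z^n_T)\bigr],$$
so both families of optimizers live in $n$-independent feasible sets (up to the harmless rescaling $z_n \to z_\infty$, absorbed by the positive-homogeneity of $\cX$ and $\cY$). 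Since $\cX(z)_T$ and $\cY(z)_T$ are convex, solid, bounded in probability under \eqref{noArb}, and Fatou-closed, the Delbaen--Schachermayer/Koml\'os lemma produces, from any subsequence of $\{\widehat X^n_T(z_n)\}$ (resp.\ $\{\widetilde Y^n_T(z_n)\}$), forward convex combinations converging $\bP$-a.s.\ to some $X^\ast \in \cX(z_\infty)_T$ (resp.\ $Y^\ast \in \cY(z_\infty)_T$).

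The main obstacle is then to identify every such cluster point with the corresponding base optimizer and to upgrade convex-combination convergence to convergence of the original sequences. Concavity (resp.\ convexity) of $U$ (resp.\ $V$) applied to the convex combinations, together with $u_n(z_n) \to u_\infty(z_\infty)$ and $v_n(z_n) \to v_\infty(z_\infty)$ from Proposition \ref{propconvuv} and the $\bL^1(\bP)$-convergence $Z^n_T \to 1$ from Remark \ref{remTVconv}, reduces the identification to passing to the limit in the mixed-measure objectives $\bE_\bP[Z^n_T U(X_T)]$ and $\bE_\bP[Z^n_T V(Y_T/Z^n_T)]$; this is precisely what Assumptions \ref{asUI1} (controlling the negative part on the primal side) and \ref{asUI} (controlling the positive part on the dual side) are designed for, with Fenchel-type conjugacy bridging the two sides. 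Uniqueness from the first step then forces $X^\ast = \widehat X^\infty_T(z_\infty)$ and $Y^\ast = \widehat Y^\infty_T(z_\infty)$, and a standard sub-subsequence argument together with strict concavity/convexity precludes any other cluster point, promoting the Koml\'os-type convergence to convergence in probability of $\{\widehat X^n_T(z_n)\}$ and $\{\widetilde Y^n_T(z_n)\}$; applying the continuous mapping theorem to $\widehat Y^n_T(z_n) = \widetilde Y^n_T(z_n)/Z^n_T$ together with $Z^n_T \to 1$ in probability finally yields \eqref{convOpt}.
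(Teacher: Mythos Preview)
Your outline assembles the right ingredients (finiteness from Proposition~\ref{propconvuv}, the bijection $\cY^n\leftrightarrow\cY$ via $Z^n$, value-function convergence, strict convexity, Assumptions~\ref{asUI1} and~\ref{asUI}), but the step you label ``standard'' is exactly where the proof lives, and as written it does not go through.

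Uniqueness of the Koml\'os cluster point does \emph{not} imply convergence in probability of the original sequence: an i.i.d.\ $\{0,1\}$-valued sequence with mean $\tfrac12$ has every forward-convex-combination limit equal to the constant $\tfrac12$, yet fails to converge in probability. So ``every subsequence has a further subsequence whose Koml\'os limit is $\widehat Y^\infty_T$'' is not enough. What actually forces $\widehat Y^n_T\to\widehat Y^\infty_T$ in probability is a \emph{quantitative} strict-convexity gap: one compares the $n$-th dual objective at the midpoint $\tfrac12\bigl(\widehat Y^n_T(z_n)+\widehat Y^\infty_T(z_\infty)/Z^n_T\bigr)$ with the average of the two objectives, and shows that a persistent separation $|\widehat Y^n_T-\widehat Y^\infty_T/Z^n_T|>\delta$ on a set of probability $>\delta$ would push the midpoint value strictly below $v_\infty(z_\infty)$, contradicting $v_n\to v_\infty$. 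This is precisely the argument the paper carries out, introducing an explicit gap function $\Gamma(x,y)=\tfrac12(V(x)+V(y))-V(\tfrac{x+y}{2})$ and auxiliary perturbations $h^{n,m}$ chosen so that Assumption~\ref{asUI} provides the uniform integrability needed to pass to the limit; Koml\'os is not used at all. Your identification step has a related loose end: after the change of variables the $n$-th dual objective is $\bE\bigl[Z^n_T V(Y_T/Z^n_T)\bigr]$, not $\bE[V(Y_T)]$, so convexity of $V$ applied to convex combinations of $\widetilde Y^{n_j}_T$ controls the wrong quantity, and relating $\bE[V(\widetilde Y^n_T)]$ to $\bE[Z^n_T V(\widetilde Y^n_T/Z^n_T)]$ is not obviously covered by Assumptions~\ref{asUI1}--\ref{asUI}. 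Finally, once the dual convergence is established, the paper gets the primal convergence for free via $\widehat X^n_T(z_n)=-V'(\widehat Y^n_T(y_n))$ with $y_n=u_n'(z_n)\to u_\infty'(z_\infty)$ and continuity of $V'(\omega,\cdot)$; a separate compactness argument on the primal side is unnecessary.
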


\subsection{Convergence of the indifference prices}
Let us recall the definition of indifference prices. Fix $f \in\bL^\infty\(\Omega, \cF, \bP\)$. Then, for every $(x,q)\in\bR^2$, let us  denote
\be\label{defXxq}\begin{split}
\cX(x,q) :=\left\{ X = x + H\cdot S:~~ X\geq -x'(X)~and~X_T + qf\geq 0\right\},
\end{split}
\ee
where $x'(X)$ is some non-negative constant, which can be different from $x$ and which can depend on $X$.

\begin{defn}\label{defIndPrice} Following \cite[page 157]{KarKar21}\footnote{{This definition, in a different form, goes back to \cite{Davis97}.}}, we define the indifference price of $f$ corresponding to the initial wealth $x$, a utility stochastic field $U$, and a probability measure $\bQ$ is a constant $\Pi =  \Pi\(f,x,U, \bQ\)$, such that
\be\label{defIP}
\bE_{\bQ}\[U(X_T +qf)\]\leq u(x),\quad  q\in\bR, \quad X\in \cX(x-q\Pi,q),
\ee
where $u(x)$ is given by \eqref{primalProblem}  corresponding to the stochastic utility field $U$ and the probability measure $\bQ$, that is $u(x) = \sup\limits_{X\in\cX(x)}\bE_{\bQ}\[U(X_T)\]$, $x>0$. 
\end{defn}

To focus on probability measures and utility stochastic fields that lead to well-posed problems of computing the indifference prices, we introduce the class of utility stochastic fields that have finite both primal and dual value functions under a given probability measure $\bQ$, satisfy \eqref{finValue} for one fixed $n$, that is 
\be\label{finValueU}
 \sup\limits_{X\in\cX(x)}\bE_{\bQ}\[ U(X_T)\]>-\infty, \quad x>0,
\ee
\be\label{finValueV}
 \inf\limits_{Y\in {\cY^{\bQ}}(y)}\bE_{\bQ}\[ V(Y_T)\]<\infty,\quad y>0,
\ee
where ${\cY^{\bQ}}$'s are given by \eqref{defY}, with one fixed $\bQ$ being used instead of $\bP^n$'s. 
\begin{defn}\label{defFV}
Let a given probability measure $\bQ$ be fixed. We define the class of utility stochastic fields satisfying \eqref{finValueU} and whose conjugate satisfies \eqref{finValueV} by 
$\cF\cV(\bQ)$. 
\end{defn}
One can see that $\cF\cV(\bQ)\neq \emptyset$ for every $\bQ\sim\bP$, as $\cF\cV(\bQ)$ includes all deterministic Inada utility functions, whose convex conjugates are bounded from above.
\begin{prop}\label{propUBPconv}
Let the assumptions of Proposition \ref{propconvuv}, and let us consider a sequence of strictly positive numbers $x_n$ converging to $x_\infty>0$. Let $y_n:=u'_n(x_n), n \in\bN^*$, and suppose that 
$\widehat Y^n(y_n)$ is a $\bP^n$-martingale,  $n \in\bN^*$. Then, for every bounded contingent claim $f$, $x>0$, probability measure $\bP_n$ and $U_n\in\cF\cV(\bP_n)$, the indifference prices $\{p_n\}  = \{\Pi(f, x_n, U_n, \bP_n)\}$ are singletons, $n\in\bN^*$, and we have
\be\label{UBPconv}
\lim\limits_{n\to\infty}p_n = p_\infty,
\ee
where each $p_n$ has the representation
\be\label{ubpRep}
p_n = \bE_{\bP^n}\[\frac {\widehat Y^n_T(y_n)}{y_n}f\]=\bE \[Z^n_T\frac {\widehat Y^n_T(y_n)}{y_n}f\],\quad n \in\bN^*.
\ee
\end{prop}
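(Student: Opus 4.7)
The plan is to first derive, for each fixed $n\in\bN^*$, the dual representation $p_n = \bE_{\bP^n}\[\widehat Y^n_T(y_n) f/y_n\]$ along with the singleton property, and then to upgrade the $\bP$-probability convergence of the dual optimizers and of the densities $Z^n_T$ to $\bL^1(\bP)$-convergence of the pricing densities $Z^n_T\widehat Y^n_T(y_n)/y_n$, from which the conclusion follows via the boundedness of $f$.

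For the representation, fix $n$ and $X\in\cX(x_n - q p_n, q)$. By Fenchel's inequality applied pointwise,
$$U(X_T + qf) \leq V\bigl(\widehat Y^n_T(y_n)\bigr) + (X_T + qf)\widehat Y^n_T(y_n).$$
To control $\bE_{\bP^n}\[X_T\widehat Y^n_T(y_n)\]$, I write $X + x'(X)\geq 0$ and observe that this is a non-negative wealth process starting from $x_n - qp_n + x'(X)$. The definition of $\cY^{\bP^n}(y_n)$ yields the supermartingale bound $\bE_{\bP^n}\[(X_T + x'(X))\widehat Y^n_T(y_n)\] \leq (x_n - qp_n + x'(X))\,y_n$, and the $\bP^n$-martingale hypothesis on $\widehat Y^n(y_n)$ allows cancellation of the term $x'(X)y_n$ on both sides, leaving $\bE_{\bP^n}\[X_T\widehat Y^n_T(y_n)\] \leq (x_n - qp_n)y_n$. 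Combining this with $\bE_{\bP^n}\[f\widehat Y^n_T(y_n)\] = y_n p_n$ and the conjugacy relation $v_n(y_n) + x_ny_n = u_n(x_n)$, valid since $y_n = u_n'(x_n)$, one obtains $\bE_{\bP^n}\[U(X_T + qf)\] \leq u_n(x_n)$ for every admissible $X$ and $q\in\bR$, so that $p_n$ is an indifference price. The singleton property follows from strict concavity of $U$ and the strict convexity of $v_n$ through standard Kramkov–Sirbu-type arguments: any other $\Pi$ would contradict the first-order optimality at $q=0$ of the concave map $q\mapsto \sup_{X\in\cX(x_n - q\Pi,q)}\bE_{\bP^n}\[U(X_T+qf)\]$.

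For the convergence, the pointwise convergence $u_n\to u_\infty$ provided by Proposition \ref{propconvuv}, combined with concavity of each $u_n$, yields locally uniform convergence on $(0,\infty)$; since $u_\infty$ is continuously differentiable (by strict convexity of its dual $v_\infty$, itself inherited from strict convexity of $V$), this upgrades to $y_n = u_n'(x_n)\to u_\infty'(x_\infty) = y_\infty$. Proposition \ref{lemConvOpt} applied with the sequence $(y_n)$ then gives $\widehat Y^n_T(y_n)\to \widehat Y^\infty_T(y_\infty)$ in $\bP$-probability, and Assumption \ref{asZconv} gives $Z^n_T\to 1$ in $\bP$-probability, so that $Z^n_T\widehat Y^n_T(y_n)/y_n\to \widehat Y^\infty_T(y_\infty)/y_\infty$ in $\bP$-probability. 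The martingale hypothesis furnishes $\bE\[Z^n_T\widehat Y^n_T(y_n)/y_n\] = \bE_{\bP^n}\[\widehat Y^n_T(y_n)/y_n\] = 1$ for every $n\in\bN^*$, so Scheffe's lemma upgrades the probability convergence to $\bL^1(\bP)$-convergence, and the boundedness of $f$ delivers $p_n\to p_\infty$. The main obstacle is the tight coordination of the $\bP^n$-martingale hypothesis across the two steps — it is indispensable both for cancelling $x'(X)y_n$ in the representation argument and for the norm identity that feeds Scheffe's lemma — while the derivative convergence $y_n\to y_\infty$ hinges on $C^1$-regularity of the limiting value function and deserves explicit justification.
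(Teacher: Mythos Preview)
Your proof is correct and follows essentially the same route as the paper: establish the representation \eqref{ubpRep}, then combine the convergence of derivatives $y_n\to y_\infty$ (via convexity arguments, the paper citing Rockafellar~25.7), Proposition~\ref{lemConvOpt}, and Scheff\'e's lemma to pass to the $\bL^1(\bP)$ limit. The only cosmetic difference is that the paper outsources the representation and singleton property to \cite[Theorem~4.2]{MostovyiPietroLMP}, whereas you sketch the Fenchel-inequality argument directly; your singleton justification via first-order optimality at $q=0$ is the Davis-price heuristic and could use a sharper reference, but the idea is sound.
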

\subsection{On sufficient conditions for the martingale property of the dual minimizer}
Sufficient conditions for the martingale property of the dual minimizer are obtained in the following works. 
If the returns of risky assets are modeled with Levy processes and with power deterministic utilities, the martingale property is established in \cite{KallsenLevy}; if the returns are driven by the processes with independent increments (PII) in the sense of \cite[Chapter II]{JS}, and in power utility settings, the martingale property of the dual minimizer is obtained in \cite{KallsenPII}; characterizations of the martingale property via the reverse H\"older inequality are obtained in \cite{Marcel}. Beyond the power/logarithmic utilities, \cite{MostovyiSirbuThaleia} connects the martingale property to the stochastic dominance of the dual domain of any order from second and higher up to and including the infinite order. %which applies to deterministic utilities.  

%{Exponential Levy models \cite{KallsenLevy}} Let us recall the settings of \cite{KallsenLevy}. The discounted stock price processes is assumed to be a \tb{$d$-dimensional} semimartingale of the form
%$$S^i  = S^i_0\cE(L^i), \quad i = 1,\dots, d,$$
%where $L = (L^i)_{i = 1,\dots, d}$, is an $\bR^d$-valued Levy process  with a characteristic triplet $(b, c, F)$ relative to some truncation function $h:~\bR^d\to\bR^d$. 
%
%{Log-utility}  Let us suppose that $U = \log $ satisfying Assumption \ref{Inada}. 
%
%{Small perturbations of the power-utility} 
%
%
%{Models with conditionally independent increments \cite{KallsenPII}}

%{Example based on the results in \cite{DmitryKim}}
%
%{Examples based on stochastically dominant models under NFLVR and deterministic utilities}
%
%%\cite{Hambly2, Hambly1}.
%
%Working with deterministic utilities, we investigate the indifference prices under small perturbations of $\bP$.
 
%%%%%%%%%%%%%%%%%%%%%%%%%%%%%%%%%%%%%%%%%%%%
\subsection{On relaxation on the boundedness condition on $f$ in Proposition \ref{propUBPconv}}
The stochastic utility settings considered in the paper, in particular, allow the relaxation of the boundedness condition on the contingent claim $f$ in Proposition \ref{propUBPconv}. While, in our view, the complete analysis of this question is a topic of a separate investigation, we consider the following example.

Let us suppose that $f = \max(S^1_T - K, 0)$, which is a payoff of a call option on a risky asset. In this case, in a wide class of models, for example, in the classical Black-Scholes model of the stock price, if the strike $K$ is deterministic, the option $f$ is replicable. This implies that it admits a unique arbitrage-free price, and its indifference price equals the arbitrage-free price (see, e.g., the argument in \cite[Lemma 6.2]{MostovyiPietroLMP}). 

However, if the dynamic of risky assets admits jumps, as in \cite{MertonJumps}, \cite{KouJumps}, \cite{Kou2}, and \cite{ContTankov}, or if the strike $K\geq 0$ is random, the payoff of the European call option is nonreplicable, see \cite[Chapter 11]{Shreve2004}. Let $S^1$ be strictly positive and maximal in $\cX(S^1_0)$, where by the maximality we mean that $S^1_T$ cannot be dominated by the terminal value of any other element of $\cX(S^1_0)$, as in the Black-Scholes model, and let us suppose, without loss of generality, that $S^1_0 = 1$. 

Next, starting from a stochastic utility field $U$ satisfying Assumption \ref{Inada}, we introduce the following auxiliary stochastic utility field
\be\label{tildeU}
\tilde U(\omega, x) : = U(\omega, S^1_T(\omega) x),\quad (\omega, x)\in \Omega\times[0, \infty),
\ee
and observe that $\tilde U$ also satisfies Assumption \ref{Inada}.
Then, we have that 
$$\tilde V(\omega, y) := V\(\omega, \frac y{S^1_T(\omega)}\),\quad (\omega, y)\in \Omega\times[0, \infty),
$$
is the (pointwise in $\omega$) convex conjugate of $\tilde U(\omega, x)$. Next, in \eqref{defXxq}, we have to replace the admissibility, that is, boundedness from below by a constant by the acceptability in the sense of \cite{Delbaen-Schachermayer1997}, where we say a wealth process $X$ is acceptable if it can be written as $X' - X''$, where $X'$ is a nonnegative wealth process, and $X''$ is maximal.  Now, for the unbounded $f$ as above, we replace  \eqref{defXxq} with 
\be\label{defXxqa}
\cX(x,q) : = \left\{acceptable~X:~X_0 = x~and~X_T + qf\geq 0\right\},~~ (x,q)\in\bR^2.
\ee
Next, for every  $X\in\cX(x,q)$, $(x,q)\in\bR^2$, such that $\cX(x,q)\neq \emptyset$, from \eqref{tildeU}, we obtain 
$$U(X_T + qf) = \tilde U\(\frac {X_T}{S^1_T} + q \frac f{S^1_T}\).$$
Setting $$\tilde \cX(x) := \left\{ \frac {X}{S^1}:~X\in\cX(x)\right\}, \quad x>0,$$ we can restate the primal optimization problem as
\be\label{primalProblem2}\begin{split}
u(x) &= \sup\limits_{X\in\cX(x)}\bE\[U(X_T)\] \\
&=  \sup\limits_{X\in\cX(x)}\bE\[\tilde U\(\frac{X_T}{S^1_T}\)\] = \sup\limits_{X\in\tilde{\cX}(x)}\bE\[\tilde U\( {X_T} \)\],\quad  x>0.\\
\end{split}
\ee
For the dual optimization problem (without random endowment), with 
$$\tilde \cY(y): = \left\{ YS^1:~Y\in\cY(y)\right\}, \quad y>0,$$ we set 
$$v(y) = \inf\limits_{Y\in\cY(y)}\bE\[ V(Y_T)\] = \inf\limits_{Y\in\cY(y)}\bE\[ \tilde V(Y_TS^1_T)\] =  \inf\limits_{Y\in\tilde \cY(y)}\bE\[ \tilde V(Y_T)\],\quad y>0.$$
 The change of num\'eraire allows us to specify the utility maximization problem (via \eqref{tildeU}) with a {\it bounded} contingent claim 
\be\label{deftildef}
\tilde f: = \frac f{S^1_T},
\ee
 as follows. 
%With 
% the definition of acceptable wealth processes given in \cite{Delbaen-Schachermayer1997} and 
% $$\tilde \cX(x,q) = \{acceptable~\tilde X~under~the~numeraire~S:~\tilde X_T + q\tilde f\geq 0\},$$ 
%we set 
%\be\nn\bs
%u(x, q) = \sup\limits_{X\in\cX(x,q)}\bE\[ U(X_T + qf)\] &= \sup\limits_{X\in\cX(x,q)}\bE\[ \tilde U\(\frac{X_T + qf}{S_T}\)\]\\
%& =  \sup\limits_{X\in\tilde\cX(x,q)}\bE\[ \tilde U\( {X_T + q\tilde f} \)\],
%\end{split}\ee
%$\tilde \cX(x,q) = \left\{\frac XS:~X\in\cX(x,q)\right\}$, where, in the context of this presentation, we note that maximal elements of $\cX(1)$ are exactly the maximal elements of $\tilde\cX(1)$ multiplied by $S$. 
With $\cX(x,q)$'s being defined in \eqref{defXxqa}, we set 
\be\label{defTildeX}
\tilde \cX(x,q) = \left\{\frac X{S^1}:~X\in\cX(x,q)\right\},
\ee where, by the maximality of $S^1$ in $\cX(1) = \cX(1,0)$, we note that maximal elements of $\cX(1,0)$ are exactly the maximal elements of $\tilde\cX(1,0)$ multiplied by $S^1$. Finally, for every $X\in\cX(x,q)$, $(x,q)\in\bR^2$, with $\tilde X: = \frac{X}{S^1}$, we have 
\be\nn\bs
\bE\[ U(X_T + qf)\] &= \bE\[ \tilde U\(\frac{X_T + qf}{S^1_T}\)\] =  \bE\[ \tilde U\( {\tilde X_T + q\tilde f} \)\],
\end{split}\ee
where the latter expression is given in terms of a {\it bounded} $\tilde f$. This allows us to start from unbounded contingent claim $f=\max(S^1_T - K, 0)$ by changing num\'eraire to $S^1$ and via introducing another Inada utility stochastic field in \eqref{tildeU} to reformulate the notion of an indifference price in Definition \ref{defIndPrice}, particularly  \eqref{defIP}, in terms of a {\it bounded} contingent claim $\tilde f$ given in \eqref{deftildef} and stochastic Inada utility field defined in \eqref{tildeU}. 

In more detail, extending Definition \ref{defIndPrice} to {\it unbounded} contingent claims in the context of this section, particularly to $f = \max(S^1_T - K, 0)$, we can define 
an indifference price of $f$ corresponding to the initial wealth $x$, a utility stochastic field $U$, and a probability measure $\bQ$, to be a constant $\Pi =  \Pi\(f,x,U, \bQ\)$, such that
\be\label{IP2}
\bE_{\bQ}\[U(X_T +qf)\]\leq u(x),\quad  q\in\bR, \quad X\in \cX(x-q\Pi,q),
\ee
where $u(x)$ is given by \eqref{primalProblem2}. This constant $\Pi$ is also given by $\Pi\(\tilde f,x,\tilde U, \bQ\)$ in the sense of Definition \ref{defIndPrice}, see \eqref{defIP}, that is $\Pi$ satisfies \eqref{IP2} if and only if $\Pi$ satisfies 
\be\label{IP3}
\bE_{\bQ}\[\tilde U(X_T +q\tilde f)\]\leq u(x),\quad  q\in\bR, \quad X\in \tilde \cX(x-q\Pi,q),
\ee
where $\tilde U$ is set in \eqref{tildeU}, $u(x)$ is still given by \eqref{primalProblem2}, $\tilde f$ is defined in \eqref{deftildef}, and $\tilde \cX(x,q)$'s are specified in \eqref{defTildeX}. 

We conclude that \eqref{tildeU}, \eqref{deftildef}, \eqref{defTildeX}, and \eqref{IP3} allow us to characterize the indifference price of {\it unbounded} $f$ via the results of the remainder of this paper obtained for {\it bounded} contingent claims. 
\section{Proofs}\label{secProofs}
 
The proof of Proposition \ref{propconvuv} is given via the following lemmas. 
%%%%%%%%%%%%%%%%%%%%%%%%%%%%%%%%%%%%%
We begin with a structural lemma that establishes a relationship between the domains of the dual problems.
 \begin{lem}
 \label{lemStructure}
Let us %consider problems \eqref{primalProblem} and \eqref{dualProblem} under Assumptions \ref{Inada}, \ref{asZconv}, \ref{asUI1}, \ref{asUI}, and 
suppose that Assumption \ref{asZconv} and \eqref{noArb} hold and consider sets $\cY^n$, $n\in\bN^*$, defined in \eqref{defY}. Then, we have 
\be\bs\nn
\cY^n(y) &= \left\{\frac{Y}{Z^n} = \left(\frac{Y_t}{Z^n_t}\right)_{t\in[0,T]}:~Y\in\cY(y)\right\}\neq\emptyset \quad {and}\\
 \cY(y) &= \left\{Y^nZ^n=  \left({Y_t}{Z^n_t}\right)_{t\in[0,T]}:~Y^n\in\cY^n(y)\right\},\quad (y, n)\in(0,\infty)\times \bN^*,
\end{split}
\ee that is 
 $$Y^n\in\cY^n(y)\quad \Leftrightarrow\quad Y^n Z^n\in\cY(y), \quad (y, n)\in(0,\infty)\times \bN^*,$$
as well as 
 $$Y\in\cY(y) \quad \Leftrightarrow\quad \frac Y{Z^n}\in \cY^n(y), \quad (y, n)\in(0,\infty)\times \bN^*.$$
 \end{lem}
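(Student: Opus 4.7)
The plan is to reduce the statement to the classical change-of-measure equivalence: a nonnegative adapted process $M$ is a $\bP^n$-supermartingale if and only if $MZ^n$ is a $\bP$-supermartingale, where $Z^n=(Z^n_t)_{t\in[0,T]}$ is the càdlàg $\bP$-martingale density process of $\bP^n$ with respect to $\bP$. Since $\bP^n\sim\bP$ by Assumption \ref{asZconv}, $Z^n$ is strictly positive, $Z^n_0=1$, and $1/Z^n$ is well defined, so the maps $Y\mapsto Y/Z^n$ and $Y^n\mapsto Y^nZ^n$ are genuine bijections between nonnegative càdlàg processes.

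First I would record this Bayes-type equivalence in the form needed here: for any $X\in\cX(1)$ and any nonnegative $Y^n$, Bayes' rule gives
\[
\bE_{\bP^n}\[ X_tY^n_t \,\big|\, \cF_s\]
= \frac{1}{Z^n_s}\,\bE_{\bP}\[ X_tY^n_tZ^n_t \,\big|\, \cF_s\],\qquad 0\le s\le t\le T,
\]
so $XY^n$ is a $\bP^n$-supermartingale iff $X(Y^nZ^n)$ is a $\bP$-supermartingale. Applying this to every $X\in\cX(1)$ simultaneously and using $Z^n_0=1$, the initial-value condition $Y^n_0\le y$ is equivalent to $(Y^nZ^n)_0\le y$. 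Hence $Y^n\in\cY^n(y)$ if and only if $Y^nZ^n\in\cY(y)$, and the inverse direction $Y\mapsto Y/Z^n$ is handled identically by applying the same Bayes identity in the opposite direction.

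For non-emptiness, I would invoke \eqref{noArb}: $\cY(1)$ contains a strictly positive element, and scaling by $y$ shows $\cY(y)\neq\emptyset$ for every $y>0$; dividing by the strictly positive process $Z^n$ then produces a strictly positive element of $\cY^n(y)$, so both sets are non-empty.

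The only delicate point, and the one I would state explicitly rather than skip, is the passage from Bayes' formula applied $t$ by $t$ to the supermartingale property of the product process; this is routine because $Z^n$ is strictly positive with $Z^n_0=1$ and the processes involved are nonnegative, so conditional expectations are always well defined in $[0,\infty]$ and the identity
\[
\bE_{\bP^n}\!\[(XY^n)_t \,\big|\, \cF_s\]Z^n_s = \bE_{\bP}\!\[(XY^nZ^n)_t \,\big|\, \cF_s\]
\]
transfers the supermartingale inequality across measures without any integrability hypothesis beyond what the definitions of $\cY(y)$ and $\cY^n(y)$ already impose.
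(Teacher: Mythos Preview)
Your proof is correct and follows essentially the same approach as the paper. The paper's proof consists of a single sentence invoking \cite[Lemma 3.5.2]{KaratzasShreve1}, which is precisely the Bayes formula for conditional expectations under an equivalent change of measure; you have simply written out that argument in full, including the observation that $Z^n_0=1$ (since $\cF_0$ is trivial) and the use of \eqref{noArb} for non-emptiness.
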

 \begin{proof}
 The assertions immediately follow from  \eqref{defY} and Assumption \ref{asZconv} via \cite[Lemma 3.5.2]{KaratzasShreve1}. 
 \end{proof}
  %%%%%%%%%%%%%%%%%%%%%%%%%%%%%%%%%%%%%
 
\begin{lem}\label{lemFinValue}
Under the assumptions of Proposition \ref{propconvuv}, we have that condition \eqref{finValue} holds, and furthermore, for every $n\in\bN^*$, $u_n$ and $v_n$ are finite-valued on $(0,\infty)$. 
\end{lem}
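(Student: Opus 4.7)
The plan is to split the four required bounds into two pairs: first, use Assumptions \ref{asUI1} and \ref{asUI} directly to secure $u_n(x)>-\infty$ and $v_n(y)<\infty$ for every $(x,y,n)\in(0,\infty)^2\times\bN^*$, and then close the argument by means of the elementary conjugacy inequality
\begin{equation*}
u_n(x) \leq v_n(y) + xy, \qquad x,y>0,\quad n\in\bN^*,
\end{equation*}
which is a consequence of the pointwise Fenchel inequality $U(X_T)\leq V(Y_T)+X_T Y_T$ together with the supermartingale bound $\bE_{\bP^n}[X_T Y_T]\leq X_0 Y_0 \leq xy$ built into the definition of $\cY^n$.

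For the lower bound on $u_n$, fix $x>0$ and take the $X\in\cX(x)$ furnished by Assumption \ref{asUI1}. For $n\in\bN$, uniform integrability immediately yields $\bE_{\bP^n}[U^-(X_T)] = \bE[Z^n_T U^-(X_T)]<\infty$; for $n=\infty$, combining the UI with $Z^n_T\to 1$ in probability (Remark \ref{remTVconv}) gives $L^1(\bP)$-convergence of $Z^n_T U^-(X_T)$ to $U^-(X_T)$, whence $\bE[U^-(X_T)]<\infty$. In every case $u_n(x)\geq \bE_{\bP^n}[U(X_T)]>-\infty$. For the upper bound on $v_n$, fix $y>0$ and take $Y\in\cY(y)$ from Assumption \ref{asUI}. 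Lemma \ref{lemStructure} identifies $Y/Z^n\in\cY^n(y)$, so
\begin{equation*}
v_n(y) \leq \bE_{\bP^n}\left[V\left(\frac{Y_T}{Z^n_T}\right)\right] = \bE\left[Z^n_T\, V\left(\frac{Y_T}{Z^n_T}\right)\right];
\end{equation*}
the positive part is uniformly integrable in $n\in\bN$ by assumption, and the case $n=\infty$ is handled by the same UI-plus-convergence-in-probability reasoning applied to $Z^n_T V^+(Y_T/Z^n_T)$. Under convention \eqref{convv}, this gives $v_n(y)<\infty$.

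The conjugacy inequality then provides the remaining two bounds: since $v_n(y_0)<\infty$ for some $y_0>0$ we get $u_n(x)\leq v_n(y_0)+x y_0<\infty$, and rewriting Fenchel as $V(Y_T)\geq U(X_T)-X_T Y_T$, integrating against $\bP^n$, and using the supermartingale bound yields $v_n(y)\geq \bE_{\bP^n}[U(X_T)]-xy>-\infty$ for the $X$ supplied by Assumption \ref{asUI1}. The main delicate point I expect is not algebraic but a matter of bookkeeping: one must invoke the Fenchel inequality in the direction compatible with the paper's one-sided conventions \eqref{convu}--\eqref{convv}, using the specific $X$ and $Y$ delivered by Assumptions \ref{asUI1} and \ref{asUI} so that only the controlled negative part of $U$ or positive part of $V$ enters each integral, and one must treat $n=\infty$ separately because Assumption \ref{asUI1} is formulated only for $n\in\bN$ and the limit case requires a short Scheff\'e/UI argument.
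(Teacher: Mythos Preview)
Your proof is correct and follows essentially the same route as the paper's own argument: both obtain $u_n(x)>-\infty$ directly from Assumption~\ref{asUI1}, obtain $v_n(y)<\infty$ by inserting $Y/Z^n\in\cY^n(y)$ via Lemma~\ref{lemStructure} and invoking Assumption~\ref{asUI}, and then close the remaining two bounds with the conjugacy inequality $u_n(x)\leq v_n(y)+xy$. Your treatment of $n=\infty$ via the UI-plus-Scheff\'e step is more explicit than the paper's, which simply folds that case into the same display without comment.
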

\begin{proof}
Let us fix $n\in\bN^*$ and $z>0$. From conjugacy between $U$ and $V$, we get
$$U(x) \leq V(y) + xy,\quad (x,y)\in\bR^2,$$
and thus for $X\in\cX(z)$ and $Y\in\cY(z)$ as in Assumptions \ref{asUI1} and \ref{asUI}, respectively, using Lemma \ref{lemStructure}, we get  
\be\label{1191}
Z^n_T U(X_T)\leq %Z^n_T \(V\(\frac{Y_T}{Z^n_T}\) + X_T \frac{Y_T}{Z^n_T}\) = 
Z^n_T V\(\frac{Y_T}{Z^n_T}\) + X_T  {Y_T},\quad \Pas.
\ee
Let us consider $Z^n_T U(X_T)$. Taking the expectation, we get
\be\label{1192}
u_n(z)\geq \bE\[Z^n_T U(X_T)\]\geq  \bE\[Z^n_T\(- U^-(X_T)\)\]>-\infty,
\ee
where, in the last inequality, we have used Assumption \ref{asUI1}. Likewise, in the right-hand side of \eqref{1191}, taking the expectation, we obtain  
\be\label{1193}
v_n(z)\leq \bE\[Z^n_T V\(\frac{Y_T}{Z^n_T}\) \]\leq \bE\[Z^n_T V^+\(\frac{Y_T}{Z^n_T}\) \]<\infty,
\ee
where the last inequality follows from Assumption \ref{asUI}.  As $z>0$ and $n\in\bN^*$ are arbitrary, \eqref{1192} and \eqref{1193} imply \eqref{finValue}. Furthermore,  one can show  that 
\be\label{1194}
u_n(z) \leq v_n(z) + z^2,\quad z>0,\quad n\in\bN^*.
\ee
Then, 
\eqref{1192} and \eqref{1193} together with \eqref{1194} imply that $u_n$ and $v_n$ are finite-valued on $(0,\infty)$. 
\end{proof}

\begin{lem}\label{lemPrimalBound}
Under the conditions of Proposition \ref{propconvuv}, we have 
\be\label{10253}
\liminf\limits_{n\to\infty}u_n(x) \geq u_\infty(x),\quad x>0.
\ee
 \end{lem}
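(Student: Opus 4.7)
The plan is to prove the inequality by evaluating the $\bP^n$-expectation at a carefully constructed admissible wealth. Fix $x > 0$, $\delta \in (0, x)$, and $\varepsilon > 0$. First I would choose an $\varepsilon$-optimizer $X^\varepsilon \in \cX(x - \delta)$ for $u_\infty(x - \delta)$, which is finite-valued by Lemma \ref{lemFinValue}, so that $\bE[U(X^\varepsilon_T)] \geq u_\infty(x - \delta) - \varepsilon$. Next, let $\bar X \in \cX(\delta)$ be a wealth process delivered by Assumption \ref{asUI1} applied to initial wealth $\delta$; thus $\{Z^n_T U^-(\bar X_T)\}_{n \in \bN}$ is uniformly integrable. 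By the additive structure of $\cX$, the sum $X^\varepsilon + \bar X$ belongs to $\cX(x)$, which yields the lower estimate $u_n(x) \geq \bE_{\bP^n}[U(X^\varepsilon_T + \bar X_T)]$.

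Since $\bar X_T \geq 0$ and $U$ is nondecreasing, one has the pointwise bounds $U(X^\varepsilon_T + \bar X_T) \geq U(\bar X_T)$ and $U(X^\varepsilon_T + \bar X_T) \geq U(X^\varepsilon_T)$. The first gives $U^-(X^\varepsilon_T + \bar X_T) \leq U^-(\bar X_T)$, and hence $Z^n_T U(X^\varepsilon_T + \bar X_T) \geq -Z^n_T U^-(\bar X_T)$; this dominating sequence is uniformly integrable and converges to $U^-(\bar X_T)$ in $\bL^1(\bP)$ by Vitali's theorem, using $Z^n_T \to 1$ in $\bP$-probability from Assumption \ref{asZconv}. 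A Fatou-type lemma with an $\bL^1$-converging lower bound (standard: apply ordinary Fatou to the nonnegative sequence $Z^n_T U(X^\varepsilon_T + \bar X_T) + Z^n_T U^-(\bar X_T)$ along any a.s.-convergent subsequence of $Z^n_T$, then subtract the convergent expectations of $Z^n_T U^-(\bar X_T)$ and pass to the full sequence by a standard subsequence principle) then gives $\liminf_n \bE[Z^n_T U(X^\varepsilon_T + \bar X_T)] \geq \bE[U(X^\varepsilon_T + \bar X_T)]$. Combining with $\bE[U(X^\varepsilon_T + \bar X_T)] \geq \bE[U(X^\varepsilon_T)] \geq u_\infty(x - \delta) - \varepsilon$ produces $\liminf_n u_n(x) \geq u_\infty(x - \delta) - \varepsilon$. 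Sending $\varepsilon \downarrow 0$ and then $\delta \downarrow 0$, and using that $u_\infty$ is real-valued and concave on $(0,\infty)$, hence continuous there, completes the argument.

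The main obstacle is dominating $Z^n_T U^-(X_T)$ uniformly in $n$ for the chosen $X$. The natural convex-combination variant $\lambda X^\varepsilon + (1-\lambda)\bar X \in \cX(x)$ is insufficient, because bounding $U^-((1-\lambda)\bar X_T)$ by a constant multiple of $U^-(\bar X_T)$ fails whenever $U(0) = -\infty$ and $\bar X_T$ may attain $0$. The additive splitting $X^\varepsilon + \bar X$ circumvents this: it strictly boosts the argument of $U$ pointwise, so the uniform-integrability control of Assumption \ref{asUI1} transfers intact, at the cost of the $\delta$ units of initial capital that are recovered in the final limit via continuity of $u_\infty$.
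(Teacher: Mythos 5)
Your proof is correct and follows essentially the same route as the paper's: the additive splitting $X^\varepsilon+\bar X\in\cX(x)$ with $\bar X$ supplied by Assumption \ref{asUI1} at a small initial wealth, the transfer of uniform integrability of the negative parts via monotonicity of $U$, a Fatou argument, and the continuity of $u_\infty$ from its finiteness and concavity. The only cosmetic differences are that you use an $\varepsilon$-optimizer where the paper uses the exact optimizer $\widehat X^\infty(x-\e)$, and that you spell out the uniform-integrability version of Fatou's lemma that the paper invokes implicitly.
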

  %%%%%%%%%%%%%%%%%%%%%%%%%%%%%%%%%%%%%
   %%%%%%%%%%%%%%%%%%%%%%%%%%%%%%%%%%%%%
 \begin{proof}
%It is actually enough to prove that for a given $\e>0$, we have 
%$$\liminf\limits_{n\to\infty}u_n(z) \geq u_\infty(z -\e).$$

Let us fix $x>0$ and $\e\in (0,x)$, and consider $X\in\cX(\e)$, such that 
$Z^n_TU^{-}(X_T)$, $n\in\bN$, is uniformly integrable. The existence of such $X$ follows from Assumption \ref{asUI1}. With $X^\infty_T(x -\e)$ denoting the optimizer associated with $u_\infty$ at $(x-\e)$, we have 
\be\label{10251}
u_n(x) \geq \bE\[Z^n_TU(X_T + \widehat X^\infty_T(x -\e))\], \quad n\in\bN^*.
\ee
Next, from the monotonicity of $U$, we get 
 \be\nn
 Z^n_TU^{-}(X_T + \widehat X^\infty_T(x -\e))\leq Z^n_TU^{-}(X_T), \quad n\in\bN^*,
\ee
and therefore, using Assumption \ref{asUI1}, 
we get 
\be\label{10252}
Z^n_TU^{-}(X_T + \widehat X^\infty_T(x -\e)), ~~ n\in\bN^*, ~~ {\rm is~ uniformly~integrable~under~}\bP.
\ee
Consequently, from \eqref{10251} and \eqref{10252}, using Fatou's lemma, we get
\be\nn\bs
&\liminf\limits_{n\to\infty}u_n(x) \geq \liminf\limits_{n\to\infty}\bE\[Z^n_TU(X_T + \widehat X^\infty_T(x -\e))\] \\
&\geq \bE\[ U(X_T + \widehat X^\infty_T(x -\e))\] \geq  \bE\[ U(\widehat X^\infty_T(x -\e))\] = u_\infty(x-\e).
\end{split}
\ee
As $\e>0$ is arbitrary, by taking the limit as $\e\downarrow 0$, and using continuity of $u_\infty$ at $x$ (by the finiteness of $u_\infty$ established in Lemma \ref{lemFinValue}, and concavity), we conclude that \eqref{10253} holds. 

 \end{proof}
 \begin{lem}\label{lemDualBound}
 Under the conditions of Proposition \ref{propconvuv}, we have 
  \be\label{9222} 
\limsup\limits_{n\to\infty}v_n(y) \leq v_\infty(y), \quad y>0.
\ee
 \end{lem}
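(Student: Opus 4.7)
The plan is to mirror the strategy of Lemma \ref{lemPrimalBound} in a dual fashion. By Lemma \ref{lemStructure}, every $Y^n \in \cY^n(y)$ has the form $Y/Z^n$ with $Y \in \cY(y)$, so constructing near-optimal competitors for $v_n(y)$ reduces to selecting a suitable $Y \in \cY(y)$ and estimating its cost under $\bP^n$:
$$v_n(y) \leq \bE_{\bP^n}\!\left[V\!\left(\frac{Y_T}{Z^n_T}\right)\right] = \bE\!\left[Z^n_T\, V\!\left(\frac{Y_T}{Z^n_T}\right)\right].$$

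Fix $\e \in (0,y)$ and pick a near-optimizer $Y^\e \in \cY(y-\e)$ with $\bE[V(Y^\e_T)] \leq v_\infty(y-\e) + \e$ (this exists by Lemma \ref{lemFinValue}, which gives finiteness of $v_\infty$ on $(0,\infty)$), together with $Y^\star \in \cY(\e)$ from Assumption \ref{asUI}, so that $(Z^n_T V^+(Y^\star_T/Z^n_T))_{n\in\bN}$ is uniformly integrable. The definition \eqref{defY} immediately gives additivity in the parameter, $Y^\e + Y^\star \in \cY(y)$, so that
$$v_n(y) \leq \bE\!\left[Z^n_T\, V\!\left(\frac{(Y^\e+Y^\star)_T}{Z^n_T}\right)\right].$$

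Since $V$ is decreasing on $[0,\infty)$ and $Y^\e \geq 0$, we have $V((Y^\e+Y^\star)_T/Z^n_T) \leq V(Y^\star_T/Z^n_T)$, whence the positive part $Z^n_T V^+((Y^\e+Y^\star)_T/Z^n_T)$ is also uniformly integrable. By Assumption \ref{asZconv}, $Z^n_T \to 1$ in probability, and continuity of $V$ on $(0,\infty)$ gives $Z^n_T V((Y^\e+Y^\star)_T/Z^n_T) \to V((Y^\e+Y^\star)_T)$ in probability. Combining Vitali's convergence theorem for the positive parts with Fatou's lemma for the negative parts yields
$$\limsup_{n\to\infty} \bE\!\left[Z^n_T\, V\!\left(\frac{(Y^\e+Y^\star)_T}{Z^n_T}\right)\right] \leq \bE[V((Y^\e+Y^\star)_T)] \leq \bE[V(Y^\e_T)] \leq v_\infty(y-\e) + \e,$$
where the second inequality is again monotonicity of $V$. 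As $v_\infty$ is convex and finite on $(0,\infty)$ by Lemma \ref{lemFinValue}, it is continuous there; letting $\e \downarrow 0$ delivers \eqref{9222}.

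The main obstacle I anticipate is cleanly treating the pointwise limit where $V$ might blow up. If $V(0+) = +\infty$, then integrability of $Z^n_T V^+(Y^\star_T/Z^n_T)$ forces $Y^\star_T > 0$ almost surely, so $(Y^\e+Y^\star)_T > 0$ almost surely and continuity of $V$ on $(0,\infty)$ alone suffices; if $V(0+) < \infty$, then $V$ extends continuously to $0$ and no difficulty arises. If desired, one can sidestep this dichotomy by adding a small multiple of the strictly positive element from \eqref{noArb} to $Y^\star$ before invoking the domination argument, which preserves uniform integrability via the same monotonicity of $V$.
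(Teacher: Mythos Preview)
Your proof is correct and follows essentially the same approach as the paper: split $y$ as $(y-\e)+\e$, use a (near-)optimizer in $\cY(y-\e)$ plus the $\cY(\e)$-element from Assumption~\ref{asUI} to build a competitor in $\cY^n(y)$ via Lemma~\ref{lemStructure}, control the positive part by monotonicity of $V$ and uniform integrability, pass to the limit with a Fatou-type argument, and let $\e\downarrow 0$ using continuity of $v_\infty$. The only cosmetic differences are that the paper uses the exact dual minimizer $\widehat Y^\infty(y-\e)$ rather than a near-optimizer, and that you spell out the Vitali/Fatou split and the $V(0+)$ dichotomy more explicitly than the paper does.
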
 
 %%%%%%%%%%%%%%%%%%%%%%%%%%%%%%%%%%%%%
   %%%%%%%%%%%%%%%%%%%%%%%%%%%%%%%%%%%%%
 \begin{proof}
 Let us fix $y>0$ and $\e\in(0,y)$ and denote  $ \widehat Y^\infty(y -\e)$ the dual minimizer corresponding associated with $v_\infty$ at $(y-\e)$. Then, using Lemma \ref{lemStructure}, for every $\tilde Y\in\cY$, we get  
\be\label{10257}
v_n(y) \leq \bE\[Z^n_T V\(\frac{\e\tilde Y_T + \widehat Y_T^\infty(y -\e)}{Z^n_T}\)\],\quad n\in\bN^*.
\ee
By the monotonicity of $V$, we get
\be\label{1196}
Z^n_T V^+\(\frac{\e\tilde Y_T + \widehat Y_T^\infty(y -\e)}{Z^n_T}\)\leq Z^n_T V^+\(\frac{\e\tilde Y_T}{Z^n_T}\),\quad n\in\bN^*.
\ee
With ${Y}$ being as in Assumption \ref{asUI} for $y = \e$, let us consider $\tilde Y = \frac{Y}{\e}$. Then it follows, from \eqref{1196} and Assumption \ref{asUI}, that the sequence $Z^n_T V^+\(\frac{\e\tilde Y_T + \widehat Y_T^\infty(y -\e)}{Z^n_T}\)$, $n\in\bN^*$, is uniformly integrable. 
%Here we observe that for every $z>0$,  
%$zV\(\frac yz\)$, $y<0$, is the perspective mapping of the convex function $V$, \tr{and for $z>1$,   $V^+(y/z) >V^+(y)$ and thus $zV^+(y/z)>V^+(y)$. \tb{Therefore, an integrability condition instead of Assumption \ref{asFinDual} has to be placed exactly here.} 
%%It has the representation (by \cite[Propostion E.1.2.1, p. 214]{LemHur}) $$zV(y/z) = \sigma_{epi U}(y, -z) =\sup\{ -y  x  + rz,~(x,r)\in hypo U\} = \sup_{x>0}\( -yx + zU(x)\) = z\sup_{x>0}\(U(x) - x\frac yz\).$$ 
%%$$zV(y/z)  = \sup_{x>0}(zU(x) - x  y ).$$ 
%%$$Z^n_T V^+\(\frac{\e\tilde Y_T}{Z^n_T}\) \leq $$
%%This is a very good observation, 
%\\ We need $$Z^n_T V^+\(\frac{\e\tilde Y_T + \widehat Y_T^\infty(y -\e)}{Z^n_T}\),\quad n\in\bN,~to~be~\bP\text-uniformly~integrable,$$
%and the integrability condition might come from something like the asymptotic function of this perspective mapping composed with $Z^n$'s.
% }
% \tb{Note that a similar integrability is needed in the proof of Lemma \ref{lemConvOpt}, see \eqref{10254}.}
 Therefore, using Fatou's lemma, we obtain
\be\label{10256}\bs
&\limsup\limits_{n\to\infty}v_n(y) \leq\limsup\limits_{n\to\infty} \bE\[Z^n_T V\(\frac{\e\tilde Y_T + \widehat Y_T^\infty(y -\e)}{Z^n_T}\)\] \\ 
&\leq  \bE\[ V\( \e\tilde Y_T + \widehat Y_T^\infty(y -\e)  \)\] \leq \bE\[ V\(  \widehat Y_T^\infty(y -\e)  \)\] = v_\infty(y-\e).
\end{split}\ee
As $\e>0$ in \eqref{10256} is arbitrary, taking the limit as $\e\downarrow 0$  and using the continuity of $v_\infty$ at $y$, by its finiteness, established in Lemma \ref{lemFinValue}, and convexity, we get \eqref{9222}.

 \end{proof}
  %%%%%%%%%%%%%%%%%%%%%%%%%%%%%%%%%%%%%

   %%%%%%%%%%%%%%%%%%%%%%%%%%%%%%%%%%%%%
\begin{proof}[Proof of Proposition \ref{propconvuv}]
For $n = \infty$, let us fix $x>0$ and let $y = u'_{\infty}(x)$. We notice that the differentiability of $u_\infty$ follows from \cite[Theorem 3.2]{MostovyiNec} % and consider the optimizers to \eqref{primalProblem} and 
%\eqref{dualProblem}, denoted by $\widehat X^\infty(x)$ and $\widehat Y^\infty(y)$, respectively. Then, we have
Then, we have 
   \be\label{9223}
 \bs
 u_\infty(x)\leq \liminf\limits_{n\to\infty}u_n(x)  \leq 
 \limsup\limits_{n\to\infty}v_n(y) + xy \leq   v_\infty(y) + xy = u_\infty(x),
 \end{split}
 \ee
 where, in the first inequality, we used Lemma \ref{lemPrimalBound}; in the second - conjugacy relations; in the third - Lemma \ref{lemDualBound}; and in the (last) equality - \cite[Theorem 3.2]{MostovyiNec}. 
%    \be\label{9223}
% \bs
% u_\infty(x)\leq \lim\limits_{\e\to 0}\lim\limits_{n\to\infty}u_n(x + \e)  \leq 
% \lim\limits_{n\to\infty}v_n(y) + xy \leq   v_\infty(y) + xy = u_\infty(x),
% \end{split}
% \ee
As $x>0$ is arbitrary and,  by \cite[Theorem 3.2]{MostovyiNec},  $u'_{\infty}$ satisfies the Inada conditions, by the choice of $x>0$, $y$ in \eqref{9223} can take any value in $(0,\infty)$, we deduce that \eqref{convuv} holds. 
\end{proof}

 %%%%%%%%%%%%%%%%%%%%%%%%%%%%%%%%%%%%%
\begin{proof}[Proof of Proposition \ref{lemConvOpt}]
First, for every $(z_n, n)$, $n\in\bN^*$, the existence and uniqueness of optimizers to \eqref{primalProblem} and \eqref{dualProblem} follows from Proposition \ref{propconvuv} and \cite[Theorem 3.2]{MostovyiNec}.
%{\it Step 1.} 
Next, for strictly positive $x$ and $y$, let us denote
\begin{equation}\label{defGamma}\bs
\Gamma(x,y):= & \frac 12 \( V(x) + V(  y)\)-V\( \frac {x+y}2\)   \\
= &\int_0^\infty  \frac 12   \left\{\left(   V'\(  z + \frac{y-x}2\)  -V'(  z) \)1_{\{y>x\}}(z)1_{\[x, \frac{x+y}2\]}(z)\right. \\
 &+\left. \(V'\(  z + \frac{x-y}2\)-V'(  z)  \)1_{\{x>y\}}(z)1_{\[y, \frac{x+y}2\]}(z)\right\} dz, 
 \end{split}
\ee
then, by the  convexity of $V$, $\Gamma\geq 0$, and, for $x\neq y$, by the strict convexity of $V$, $\Gamma>0$. Next, let us define 
 \be\label{defzdelta}\zeta^\delta : = 1\wedge\frac \delta 2 \inf\limits_{z\leq \frac 1\delta + \frac\delta 2}\( V'\(z + \frac \delta 2\) -V'(z) \),\quad \delta >0,
 \ee
 Then, from the {\it strict} monotonicity of $V'$ and the Inada conditions, one can show that $\zeta^\delta>0$, $\Pas$,  for every $\delta>0$.  
 
 %{\it Step 2.} 
 Let $\e>0$ be fixed and let  $  Y^m\in\cY\(\frac{\e}{2^m}\)$ be such that $  Y^m$ satisfies Assumption \ref{asUI}, $m\in\bN$, and we also set $  Y^\infty= 0$ and recall from Assumption \ref{asZconv} that $Z^\infty_T=1$. 
 Next, let us define 
 \be\label{defhnm}
  h^{n, m}: = \widehat Y^n_T(z_n) +\frac{  Y^m_T}{Z^n_T}, \quad (n, m)\in(\bN^*)^2,
  \ee
 where, we recall that $\widehat Y^n_T(z_n)$ are the (dual) minimizers to \eqref{dualProblem} corresponding to $n$ and $z_n$; that is, in \eqref{dualProblem}, we have
 $$ v_n(z_n) = \bE_{\bP^n}\[V\(\widehat Y^n_T(z_n)\)\],\quad n\in\bN^*.$$ 
 
  %{\it Step 3.}
Assume by contradiction that $\widehat Y^n_T(z_n)$ does not converge in probability to $\widehat Y^\infty_T(z_\infty)$. 
Then, in view of Assumption \ref{asZconv}, there exists $\delta >0$, such that
$$\limsup\limits_{n\to\infty}\bP\[ \left| \widehat Y^n_T(z_n) - \frac{\widehat Y^\infty_T(z_\infty)}{Z^n_T} \right|>\delta\]>\delta,$$
which, in view of the construction in \eqref{defhnm}, we can rewrite as
\be\label{1221}
\limsup\limits_{n\to\infty}\bP\[ \left|h^{n, m} -\frac{h^{\infty, m}}{Z^n_T} \right|>\delta\]>\delta,\quad for~every\quad m\in\bN.
\ee
Here we stress that 
$h^{n, m} -\frac{h^{\infty, m}}{Z^n_T}$
does not depend on $m\in\bN^*$. Next, %and since $Z^n_T$, $n\in\bN^*$, is bounded in $\bL^1(\bP),
 by passing, if necessary, to a smaller $\delta>0$, from \eqref{1221} and Lemma \ref{lemStructure}, we get 
\be\label{10231}
\limsup\limits_{n\to\infty}\bP\[\left|h^{n, m} -h^{\infty, m} \frac{1}{Z^n_T} \right|>\delta,~~ \widehat Y^n_T(z_n) +\frac{\widehat Y^\infty_T(z_\infty)}{Z^n_T}+ 2\sum\limits_{m\in\bN}  \frac{Y^m_T}{Z^n_T} <\frac 1\delta\]>\delta.
\ee
Let us denote  
\be\label{defAn}
A_n:= \left\{\left|h^{n, m} -h^{\infty, m} \frac{1}{Z^n_T} \right|>\delta,~~ \widehat Y^n_T(z_n) +\frac{\widehat Y^\infty_T(z_\infty)}{Z^n_T}+ 2\sum\limits_{m\in\bN}  \frac{Y^m_T}{Z^n_T} <\frac 1\delta\right\}, ~~ n\in\bN.
\ee
Then, we can rewrite \eqref{10231} as
\be\label{12213}
\limsup\limits_{n\to\infty}\bP[A_n]> \delta>0.
\ee

Using Assumption \ref{asZconv} and since $\zeta^\delta$, defined in \eqref{defzdelta}, takes values in $(0,1]$, from   \eqref{12213}, we get\footnote{The short proof of \eqref{12231} is given in this paragraph: if, by contradiction, we suppose that 
 \be\label{12221}
 \limsup\limits_{n\to\infty} \bE[ \zeta^\delta Z^n_T1_{A_n}]=0,
 \ee
 then, in particular, we have 
 \be\label{12251}
 \bP\text-\lim\limits_{n\to\infty} \zeta^\delta Z^n_T1_{A_n}= 0.
 \ee 
 As $\zeta^\delta Z^n_T$, $n\in\bN$, are strictly positive and converge to $\zeta^\delta$ in probability $\bP$, we have  $$\bP\text-\lim\limits_{n\to\infty}\frac 1{\zeta^\delta Z^n_T} =  \frac 1{\zeta^\delta}>0.$$ Therefore, we deduce from \eqref{12251} that $1_{A_n}$, $n\in\bN$, converge to $0$ in probability $\bP$, and, by the Dominated Convergence Theorem, also in $\bL^1(\bP)$. This contradicts to \eqref{12213}.
 } 
 \be\label{12231}
 \limsup\limits_{n\to\infty} \bE[ \zeta^\delta Z^n_T1_{A_n}]>0.
 \ee
%  then, by the strict positivity of $\zeta^\delta$ and since $Z^n_T \to 1$, by Assumption \ref{asZconv}, by passing to convex combinations and using Komlos' lemma, such that convex combinations of $\xi_n:=Z^n_T1_{A_n}$, which we denote by $\tilde \xi_n$ converge to $\tilde \xi_\infty$, from \eqref{12221}, we obtain 
% $$\zeta^\delta \tilde \xi_n \to 0\quad in\quad  \bL^1(\bP).$$
% Therefore, we get
% $$\zeta^\delta\tilde \xi_\infty = 0.$$
%  As $Z^n_T\to 1$, by Assumption \ref{asZconv}, and $1\geq \zeta^\delta>0$, $\Pas$, we conclude that 
% $conv 1_{A_n}\to 0$ in $\bL^1(\bP)$, and thus $1_{A_n}\to 0$, this contradicts to \eqref{12213}.
%%%%%%%%%%%%%%%%
 Recalling that $A_n$'s are defined in \eqref{defAn}, $\zeta^\delta$ - in \eqref{defzdelta}, and $\Gamma$'s in \eqref{defGamma},  one can see that, on $A_n$, $\Gamma\(h^{n,m}, h^{\infty, m}\frac{1}{Z^n_T}\) \geq \zeta^\delta$, $m\in\bN$. Therefore, we obtain
 \be\label{12232}
 \limsup\limits_{n\to\infty} \bE_{\bP^n}\[ \Gamma\(h^{n,m}, h^{\infty, m}\frac{1}{Z^n_T}\)\] \geq \limsup\limits_{n\to\infty} \bE[ \zeta^\delta Z^n_T1_{A_n}]>0,\quad m\in\bN, \ee
 where, in the last inequality, we have used \eqref{12231}. As $ \limsup\limits_{n\to\infty} \bE[ \zeta^\delta Z^n_T1_{A_n}]$ does not depend on $m\in\bN$, we further obtain from \eqref{12232} that
  \be\label{12233}
  \limsup\limits_{m\to\infty} \limsup\limits_{n\to\infty} \bE_{\bP^n}\[ \Gamma\(h^{n,m}, h^{\infty, m}\frac{1}{Z^n_T}\)\] >0, \ee

%Next, by convexity of $V$, we have 
%$$\frac 12\(V\(h^{n,m}\) + V\(h^{\infty,m} \frac{1}{Z^n_T}\) \) \geq V\(\frac {h^{n,m} +  h^{\infty,m} \frac{1}{Z^n_T}}{2}\),~~ (n, m)\in(\bN^*)^2,~~  \Pas. $$
%In turn, by the strict convexity of $V$, we deduce that there exists $\eta>0$, \tb{which is a constant, if $U$ is deterministic, and which is a random variable, if $U$ is stochastic}, such that 
Next, from the definition of $\Gamma$ in \eqref{defGamma}, we get 
%\be\label{defetas}
%\eta^{n, m}: = \Gamma \(h^{n,m}, h^{\infty,m} \frac{1}{Z^n_T}\)\geq 0, \quad (n,m)\in(\bN^*)^2.
%\ee
\be\nn\bs
\frac 12\(V\(h^{n,m}\) + V\(h^{\infty, m} \frac{1}{Z^n_T}\) \)  
= V\(\frac {h^{n,m} +  h^{\infty,m} \frac{1}{Z^n_T}}{2}\)
+ \Gamma \(h^{n,m}, h^{\infty,m} \frac{1}{Z^n_T}\),\\ (n,m)\in\bN^2,~~  \Pas.
\end{split}
\ee
Taking the expectations under the respective $\bP^n$'s, we obtain
\be\label{12210}\bs
& \frac 12\(\bE_{\bP^n}\[V\(h^{n, m}\)\] + \bE_{\bP^n}\[V\(h^{\infty, m} \frac{1}{Z^n_T}\)\] \)\\  
&\geq  \bE_{\bP^n}\[ V\(\frac {h^{n,m} +  h^{\infty, m} \frac{1}{Z^n_T}}{2}\)\]  +\bE_{\bP^n}\[ \Gamma \(h^{n,m}, h^{\infty,m} \frac{1}{Z^n_T}\)\], \quad (n,m)\in\bN^2.
\end{split}\ee

Next, for every $n\in\bN^*$, since via Lemma \ref{lemStructure} we have that $h^{\infty,m} \frac{1}{Z^n_T}$ is a terminal value of an element of $\cY^n\(z_\infty + \frac{\e}{2^m}\)$ that is feasible element for \eqref{dualProblem} at $\(n,z_\infty+\frac{\e}{2^m}\)$, we can bound the right-hand side of \eqref{12210} from below by $v_n\(\frac{z^n + z_\infty}{2}+\frac{\e}{2^m}\)+\bE_{\bP^n}\[\Gamma \(h^{n,m}, h^{\infty,m} \frac{1}{Z^n_T}\) \] $, to obtain
\be\label{1222}
\bs
 &\frac 12\(\bE_{\bP^n}\[V\(h^{n, m}  \)\] + \bE_{\bP^n}\[V\(h^{\infty, m} \frac{1}{Z^n_T}\)\] \)   \\   
&\geq  v_n\(\frac{z^n + z_\infty}{2} + \frac{\e}{2^m}\) + \bE_{\bP^n}\[\Gamma \(h^{n,m}, h^{\infty,m} \frac{1}{Z^n_T}\)\] ,~~  (n,m)\in\bN^2.
\end{split}\ee
From Proposition \ref{propconvuv} (see \eqref{convuv}) and \cite[Theorem 10.8]{Rok}, we get
\be\label{1225}
\lim\limits_{n\to\infty}v_n(z_n) = v_\infty(z_\infty).
\ee
Let us fix $m\in\bN$. Then, from \eqref{1225}, 
 \eqref{defhnm}, and the monotonicity of $V$,  
 %Now, let us fix $m\in\bN$ and consider $\limsup\limits_{n\to\infty}\bE_{\bP^n}\[V\(h^{n, m}  \)\]$ (for the first term in ). 
 we obtain 
\be\label{1226}\bs
&v_\infty(z_\infty) = \lim\limits_{n\to\infty}v_n(z_n) = \lim\limits_{n\to\infty}\bE_{\bP^n}\[ V\(\widehat Y^n_T(z_n)\)\]  \\ \geq&\limsup\limits_{n\to\infty}\bE_{\bP^n}\[ V\(\widehat Y^n_T(z_n) + \frac{Y^m_T}{Z^n_T}\)\]=\limsup\limits_{n\to\infty}\bE_{\bP^n}\[ V\(h^{n,m}\)\].
\end{split}
\ee
We note that the upper bound in \eqref{1226}, $v_\infty(z_\infty)$, does not depend on $m\in\bN$ and that \eqref{1226} gives an asymptotic bound for the first term in \eqref{1222}, $\limsup\limits_{n\to\infty}\bE_{\bP^n}\[V\(h^{n, m}  \)\]$.

For the same fixed $m\in\bN$, let us establish an upper asymptotic bound for the second term in \eqref{1222}, $\limsup\limits_{n\to\infty}\bE_{\bP^n}\[V\(h^{\infty, m} \frac{1}{Z^n_T}\)\]$. Using Assumption \ref{asUI} and Fatou's lemma, similarly to the proof of Lemma \ref{lemDualBound}, we get
 \be\bs\label{10254}
\limsup\limits_{n\to\infty}\bE_{\bP^n}\[V\(h^{\infty, m} \frac{1}{Z^n_T}\)\] &\leq \bE\[V\(h^{\infty, m}\)\]   \\
= \bE\[V\(\widehat Y_T^{\infty}(z_\infty) + Y^m_T\)\]&\leq \bE\[V\(\widehat Y_T^{\infty}(z_\infty)\)\]= v_\infty(z_\infty),
\end{split}
\ee 
where, in the last inequality, we also used the monotonicity of $V$. Let us observe that the terminal upper bound in \eqref{10254}, $v_\infty(z_\infty)$, does not depend on $m\in\bN$. 

From \eqref{1222}, \eqref{1226}, and \eqref{10254}, we get 
\be\label{1227}
\bs
&v_\infty(z_\infty) \geq \frac 12\limsup\limits_{n\to\infty} \(\bE_{\bP^n}\[V\(h^{n, m}  \)\] + \bE_{\bP^n}\[V\(h^{\infty, m} \frac{1}{Z^n_T}\)\]   \)  \\   
&\geq  \limsup\limits_{n\to\infty} \left(v_n\(\frac{z^n + z_\infty}{2} + \frac{\e}{2^m}\) +\bE_{\bP^n}\[ \Gamma \(h^{n,m}, h^{\infty,m} \frac{1}{Z^n_T}\)\]\right) \\
&= v_\infty\(z_\infty + \frac{\e}{2^m}\) + \limsup\limits_{n\to\infty}\bE_{\bP^n}\[\Gamma \(h^{n,m}, h^{\infty,m} \frac{1}{Z^n_T}\) \] ,\quad   m \in\bN.
\end{split}\ee
Therefore,  rearranging terms in  in \eqref{1227}, taking $\limsup\limits_{m\to\infty}$, and using \eqref{12233}, we conclude that  
\be\nn
%\label{12217}
\bs 0<&
\limsup\limits_{m\to\infty}\limsup\limits_{n\to\infty} \bE_{\bP^n}\[\Gamma \(h^{n,m}, h^{\infty,m} \frac{1}{Z^n_T}\) \] \leq \limsup\limits_{m\to\infty}\(v_\infty(z) - v_\infty\(z_\infty + \frac{\e}{2^m}\)\),\end{split}
\ee
%where, in the first inequality, we have used \eqref{12233}. % (recalling the definition of $\eta^{n,m}$'s in \eqref{defetas}). 
%However, \eqref{12217}, 
which, however, contradicts the continuity of $v_\infty$ at $z_\infty$ that follows from the finiteness of $v_\infty$, established in Proposition \ref{propconvuv}, and convexity of $v_\infty$.
%
%On the other hand, it follows  from Proposition \ref{propconvuv} that $v_\infty$ is finite-valued  on $(0,\infty)$. Therefore, by \cite[Theorem 3.2]{MostovyiNec}, $v_\infty$ is convex on $(0,\infty)$, and thus continuous on $(0,\infty)$. Consequently,   \eqref{12217} is impossible, as \eqref{12217} contradicts to the continuity of $v_\infty$ at $z$.  
We obtain that 
 \be\label{6291}
 \bP\text-\lim\limits_{n\to\infty}\widehat Y^n_T(z_n) =\widehat Y^\infty_T(z_\infty).
 \ee
 %The other assertion can be proven similarly. 
 
 Now, let $y_n = u'_n(z_n)$, $n\in\bN^*$. We note that $y_n$'s are well-defined by the differentiability of $u_n$'s, which follows from Proposition \ref{propconvuv}, item $(1)$, and \cite[Theorem 3.2]{MostovyiNec}. Then, by Lemma \ref{propconvuv}, item $(2)$, and \cite[Theorem 25.7]{Rok}, we deduce that the sequence $y_n$, $n\in\bN$, converges to $y_\infty$. Replacing $z_n$'s with $y_n$'s in the argument above, similarly to \eqref{6291}, we get
 \be\label{6292}
 \bP\text-\lim\limits_{n\to\infty}\widehat Y^n_T(y_n) =\widehat Y^\infty_T(y_\infty).
 \ee
Since, by \cite[Theorem 3.2]{MostovyiNec} we have  $$\widehat X^n(z_n) = -V'(\widehat Y^n_T(y_n)),\quad n\in\bN^*,$$
we deduce from the  continuity of $-V'(\omega, \cdot)$, $\omega\in\Omega$, and \eqref{6292} that 
 $$\bP\text-\lim\limits_{n\to\infty}\widehat X^n_T(z_n) =\widehat X^\infty_T(z_\infty).$$
\end{proof}
\begin{proof}[Proof of Proposition \ref{propUBPconv}]
Let us observe that by Lemma \ref{lemConvOpt}, the sequence $y_n$, $n\in\bN$, converges to $y_\infty$, and the terminal values of the dual minimizers converge in probability, that is 
$$\bP\text - \lim\limits_{n\to\infty}\widehat Y^n_T(y_n) = \widehat Y^\infty_T(y_\infty).$$
This, together with Assumption \ref{asZconv}, implies that 
\be\label{1021}
 \bP\text - \lim\limits_{n\to\infty}Z^n_T\frac{\widehat Y^n_T(y_n)}{y_n} = Z^\infty_T\frac{\widehat Y^\infty_T(y_\infty)}{y_\infty}. 
\ee
Now, from the martingale property of the dual minimizers, following  \cite[Theorem 4.2]{MostovyiPietroLMP} %(see also \cite[Theorem 3.1]{HKS05} for the case of the deterministic utilities) 
one can show that  the representation \eqref{ubpRep} holds.

Next, applying the martingale property of dual minimizers again,  we deduce that  
$$1 = \lim\limits_{n\to\infty}\bE_{\bP^n}\[\frac{\widehat Y^n_T(y_n)}{y_n}\] = \bE\[\frac{\widehat Y^\infty_T(y_\infty)}{y_\infty}\],$$
and therefore, in view of \eqref{1021} and Scheffe's lemma, we deduce that the sequence $Z^n_T\frac{\widehat Y^n_T(y_n)}{y_n}$, $n\in\bN$, is uniformly integrable under $\bP$ and the convergence in \eqref{1021} also holds in $\bL^1(\bP)$. Therefore, for every bounded contingent claim $f$, the sequence $Z^n_T\frac{\widehat Y^n_T(y_n)}{y_n}f$, $n\in\bN$, is also uniformly integrable under $\bP$ and since, by \eqref{1021}, it converges to $Z^\infty_T\frac{\widehat Y^\infty_T(y_\infty)}{y_\infty}f$ in probability $\bP$, the convergence also takes place in  $\bL^1(\bP)$. Together with earlier established \eqref{ubpRep}, this implies \eqref{UBPconv}. 

\end{proof}

\section%{On the indifference price invariance with respect to changes of probability measure}
{Indifference price invariance}\label{secInv}
%\tr{It is shown in \cite[Section 6]{MostovyiPietroLMP} (even in technically more involved settings of a large market) that the indifference price of every bounded {\it replicable} contingent claim does not depend on the choice of physical measure $\bP$. Below we extend this argument to a wider class of contingent claims. 
%}
In this section, we take a closer look at the contingent claims whose indifference price does not depend on the choice of the probability measure. 
\subsection{Complete Models}  Let work under $\bP$, and denote by $\cM$ the set of equivalent to $\bP$ separating measures for $S$, and suppose that 
\be\label{noArb2}
\cM\neq \emptyset,
\ee
which is equivalent to assuming \eqref{noArb} and additionally requiring that $\cY(1)$ contains a strictly positive martingale.  
 Here we show that in complete models, the indifference price of every bounded contingent claim is unique and coincides with the arbitrage-free price.  In fact, the converse is also true. This is the subject of the following proposition. 
 
\begin{prop}\label{propComplete}
%Let us suppose that there is an equivalent separating measure for $S$ on a probability space $(\Omega, \cF, \bP)$, that is $\cM\neq \emptyset$. 
Let us suppose that $S$ satisfies \eqref{noArb2} and consider a probability measure $\bQ\sim\bP$. Then, the following conditions are equivalent:
\begin{enumerate}[(i)]
\item the model is complete,
\item for every bounded contingent claim $f$, the indifference price depends neither on 
$x>0$, $U\in\cF\cV(\bQ)$,  $\bQ$, nor on $\tilde x>0$, $\tilde U\in\cF\cV(\bP)$, $\bP$, and we have
\be\label{eqIPI}
\Pi(f, x, U, \bQ) = \bE_{\widehat\bQ}\[f\] = \Pi(f, \tilde x, \tilde U, \bP),\ee
where $\widehat\bQ$ is the unique separating measure for $S$ and $\Pi$ is defined in Definition \ref{defIndPrice}. 
\end{enumerate}
\end{prop}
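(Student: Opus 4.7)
The plan is to prove the two implications separately. The implication (ii) $\Rightarrow$ (i) is immediate: hypothesis (ii) explicitly asserts existence of a \emph{unique} equivalent separating measure $\widehat{\bQ}$, which combined with the standing condition \eqref{noArb2} is equivalent to market completeness by the second fundamental theorem of asset pricing.

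For (i) $\Rightarrow$ (ii), completeness gives a unique $\widehat{\bQ}$, and every bounded $f$ is replicable: the bounded $\widehat{\bQ}$-martingale $X^f_t:=\bE_{\widehat{\bQ}}[f\mid\cF_t]$ admits a representation $X^f=\Pi+H^f\cdot S$ with $\Pi:=\bE_{\widehat{\bQ}}[f]$ and $X^f_T=f$. I would show that $\Pi$ is the unique constant satisfying \eqref{defIP} for any $\bQ\sim\bP$, $x>0$, $U\in\cF\cV(\bQ)$; the same argument, with $(\bP,\tilde x,\tilde U)$ replacing $(\bQ,x,U)$, then yields the full chain \eqref{eqIPI}.

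For existence, given $q\in\bR$ and $X\in\cX(x-q\Pi,q)$, form $\tilde X:=X+qX^f$. Then $\tilde X$ is self-financing with $\tilde X_0=x$, bounded below (by admissibility of $X$ and boundedness of $X^f$), and $\tilde X_T=X_T+qf\geq 0$. Since $\tilde X$ is an admissible wealth process and $\widehat{\bQ}$ is separating for $S$, $\tilde X$ is a $\widehat{\bQ}$-supermartingale, hence $\bE_{\widehat{\bQ}}[\tilde X_T]\leq x$; by completeness, the nonnegative $\widehat{\bQ}$-martingale $\bE_{\widehat{\bQ}}[\tilde X_T\mid\cF_\cdot]$ is a stochastic integral of $S$, and adding the surplus $x-\bE_{\widehat{\bQ}}[\tilde X_T]\geq 0$ as a constant produces $X^*\in\cX(x)$ with $X^*_T\geq \tilde X_T$. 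Monotonicity of $U$ gives $\bE_{\bQ}[U(X_T+qf)]\leq \bE_{\bQ}[U(X^*_T)]\leq u(x)$. For uniqueness, suppose $\Pi'\neq\Pi$ also satisfies \eqref{defIP}. By $U\in\cF\cV(\bQ)$, a unique primal optimizer $\widehat X\in\cX(x)$ exists with $\bE_{\bQ}[U(\widehat X_T)]=u(x)\in\bR$. Without loss of generality $\Pi'>\Pi$; take $q=-1$ and $X:=\widehat X+X^f+(\Pi'-\Pi)$, which is an admissible wealth process with $X_0=x+\Pi'=x-q\Pi'$ and $X_T+qf=\widehat X_T+(\Pi'-\Pi)>0$, so $X\in\cX(x+\Pi',-1)$. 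Strict monotonicity of $U$, together with finiteness of $u(x)$, yields $\bE_{\bQ}[U(X_T-f)]=\bE_{\bQ}[U(\widehat X_T+(\Pi'-\Pi))]>u(x)$, contradicting \eqref{defIP} at $\Pi'$.

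The main obstacle is the existence step: $\tilde X$ need not be nonnegative pathwise, so it is not directly a member of $\cX(x)$ as defined by \eqref{defX}. The remedy is to dominate $\tilde X_T$ by the terminal value of an element of $\cX(x)$, which crucially uses completeness (to replicate $\tilde X_T$) and the $\widehat{\bQ}$-supermartingale property of admissible wealth processes (to keep the replication cost below $x$). The uniqueness argument relies on finiteness of $u(x)$, supplied by $U\in\cF\cV(\bQ)$, to justify the strict utility comparison and rule out $\infty-\infty$ ambiguities; the boundedness of $f$ is used precisely to ensure $X^f$ is bounded, so that every pivot of the form $Y\mapsto Y+qX^f$ preserves the admissibility convention of \eqref{defXxq}.
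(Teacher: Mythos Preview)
Your $(i)\Rightarrow(ii)$ argument is correct and takes a genuinely different route from the paper. The paper works on the dual side: completeness forces the set $\cD^{\bQ}$ of dominated densities to have a unique maximal element $\widehat\eta$, so $y\widehat\eta$ is the dual minimizer for every $y>0$ and every $U\in\cF\cV(\bQ)$, and the indifference price formula then gives $\bE_{\widehat\bQ}[f]$ directly. You instead argue primally via replication, essentially running the argument of Lemma~\ref{lemRepl} in the complete case; this is more elementary and avoids invoking the dual representation of the indifference price, at the cost of needing the small superreplication step to pass from the admissible $\tilde X$ back into $\cX(x)$.

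Your $(ii)\Rightarrow(i)$, however, short-circuits the substantive content of the proposition. You read the clause ``where $\widehat\bQ$ is the unique separating measure'' as part of the hypothesis, so that $(ii)$ already asserts uniqueness of the separating measure and completeness follows tautologically from the second fundamental theorem. The paper does not argue this way, and its proof reveals why: the intended content of $(ii)$ is the \emph{invariance} of indifference prices across $(x,U)$ and across the reference measures $\bP,\bQ$; the identification of the common value as $\bE_{\widehat\bQ}[f]$ is descriptive, not hypothetical. Accordingly, the paper proves $(ii)\Rightarrow(i)$ by contradiction: assuming incompleteness, it starts from a deterministic $\tilde U$ with dual minimizer $\widehat Y^{\bQ}$, picks $Y\in\cY^{\bQ}(1)$ not dominated by $\widehat Y^{\bQ}$, and builds a \emph{stochastic} field $U(\omega,x)=\tilde U(x)1_A+\alpha\tilde U(x)1_{A^c}$ whose dual minimizer $\bar Y$ differs from $\widehat Y^{\bQ}$; the bounded claim $f=\min(1,-\tilde V'(\widehat Y^{\bQ}_T),-V'(\bar Y_T))1_{\{\widehat Y^{\bQ}_T>\bar Y_T\}}$ then has $\Pi(f,\cdot,\tilde U,\bQ)\neq\Pi(f,\cdot,U,\bQ)$. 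This construction is the real work in that direction, and your proposal omits it entirely.
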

\begin{proof}Let us introduce 
$$\cD^{\bQ}:= \left\{\eta\in\bL^0_+(\Omega, \cF, \bP):~\eta\leq \frac{d\tilde \bQ}{d\bQ},~for~some~\tilde \bQ\in\cM \right\}.$$
$(i)\Rightarrow (ii).$ 
One can see that $\cD^{\bQ}$ is a convex and solid hull of the set of the Radon-Nykodim derivatives of the elements of $\cM$ with respect to $\bP$. It follows from the argument in \cite{KS}; see the proofs of Lemmas 4.1 and 4.2, that $\cD^{\bQ}$ is also closed in $\bL^0(\Omega, \cF, \bP)$. We observe that the completeness of the model implies that $\cD^{\bQ}$ has a unique $\Pas$ maximal element. Let us denote this element $\widehat\eta$. As $\cM\neq\emptyset$, it follows that $\bE_{\bQ}[\widehat\eta] = 1$ and that $\widehat\eta>0$, $\bQ\text-$a.s.. %\Pas$. 
Therefore, we can define a probability measure $\widehat\bQ\sim\bP$ via its Radon-Nykodim derivative that satisfies   
$\frac{d\widehat\bQ}{d\bQ} = \widehat\eta$. One can see that $\widehat\bQ$ is a separating measure for $S$ that is unique by the completeness assumption.  

Now, let us fix $f\in\bL^\infty(\Omega, \cF, \bP)$. Then, for every $y>0$ and $U\in\cF\cV(\bQ)$, where $\cF\cV(\bQ)$ is defined in Definition \ref{defFV}, by the maximality of $\widehat\eta$ in $\cD^{\bQ}$, we obtain that $y\widehat\eta$ is the dual minimizer at $y$. As $\bE_{\bQ}\[ \widehat\eta\]  = 1$, we deduce that the indifference price of $f$ at $x>0$ associated with $U\in\cF\cV(\bQ)$ and $\bQ$ is given by   
$$\Pi(f, x, U, \bQ) = \bE_{\widehat\bQ}\[ f\],\quad x>0, \quad U\in\cF\cV(\bQ).$$
As a similar argument can be carried out under the probability measure $\bP$, we conclude that $(ii)$ holds. 

$(ii) \Rightarrow (i).$ Let $(ii)$ holds. Assume by contradiction that the model is incomplete. Let $\tilde V$ be a bounded from the above deterministic function on $[0,\infty)$, such that $-\tilde V$ satisfies Assumption \ref{Inada}, and let $\tilde U$ its convex conjugate. As $\tilde U$ is non-random and  $\tilde V$ is bounded from above, one can show that $\tilde U\in\cF\cV(\bQ)\cap\cF\cV(\bP)$. Let us work under the probability measure $\bQ$ and consider the utility maximization problem and its dual under $\bQ$, that is 
\be\label{1301}
\sup\limits_{X\in\cX(x)}\bE_{\bQ}\[ \tilde U(X_T)\],\quad x>0,
\ee
and
\be\label{1302}
\inf\limits_{Y\in\cY^{\bQ}(y)}\bE_{\bQ}\[ \tilde V(Y_T)\],\quad y>0.
\ee
As $\tilde U\in\cF\cV(\bQ)\cap\cF\cV(\bP)$, from \cite[Theorem 3.2]{MostovyiNec}, we deduce that there exists a unique solution to \eqref{1302} at $y=1$, $\widehat Y^{\bQ}\in\cY^{\bQ}(1)$. By the incompleteness of the model assumption, there exists an element $Y\in\cY^{\bQ}(1)$ such that 
$A: =\left\{ Y_T > \widehat Y^{\bQ}_T\right\}$ has $\bQ\[A\]>0$. Let us set 
$$
\alpha := \frac 12 \frac{\bE_{\bQ}\[ \(\tilde V\(\widehat Y^\bQ_T\) - \tilde V\(  Y_T\)\)1_A\]}
{\bE_{\bQ}\[ \(\tilde V\(  Y_T\) - \tilde V\(\widehat Y^\bQ_T\) \)1_{A^c}\]}.
$$
As both the numerator and denominator are {\it strictly} positive and finite-valued (by the boundedness of $\tilde V$), we have that $\alpha>0$. Next, we define
$$U(\omega, x) := \tilde U(x) 1_A + \alpha \tilde U(x)1_{A^c},\quad (\omega, x)\in\Omega\times [0,\infty).$$
Then, $U$ satisfies Assumption \ref{Inada}, and its convex conjugate $V$ is given by 
$$V(\omega, y) = \tilde V(y) 1_A + \alpha \tilde V(y)1_{A^c},\quad (\omega, y)\in\Omega\times [0,\infty).$$
Therefore, we have 
%\be\bs\nn
%&\bE_{\bQ}\left[ V\(\widehat Y^{\bQ}_T\)\right] = \bE_{\bQ}\left[ \(\tilde V\(\widehat Y^{\bQ}_T\) - \tilde V(Y_T)\)1_A\right]  \\
%&+ 
%\alpha\bE_{\bQ}\left[ \(\tilde V\(\widehat Y^{\bQ}_T\) - \tilde V(Y_T)\)1_{A^c}\right]
%+\bE_{\bQ}\left[\tilde V(Y_T) \right]> \bE_{\bQ}\left[\tilde V(Y_T) \right]>-\infty,
%\end{split}\ee
\be\bs\nn
\bE_{\bQ}\left[ V\(\widehat Y^{\bQ}_T\)\right] =& \bE_{\bQ}\left[ \(\tilde V\(\widehat Y^{\bQ}_T\) - \tilde V(Y_T)\)1_A\right]  \\
&+ 
\alpha\bE_{\bQ}\left[ \(\tilde V\(\widehat Y^{\bQ}_T\) - \tilde V(Y_T)\)1_{A^c}\right]
+\bE_{\bQ}\left[\tilde V(Y_T) \right]\\
 =&\bE_{\bQ}\left[\tilde V(Y_T) \right] + \tfrac 12\bE_{\bQ}\left[ \(\tilde V\(\widehat Y^{\bQ}_T\) - \tilde V(Y_T)\)1_A\right]>\bE_{\bQ}\left[\tilde V(Y_T) \right].
\end{split}\ee
where, in the inequality, we have used the definition of $\alpha$. % and, in the second one, the uniform boundedness of $V$ from above inherited from the boundedness of $\tilde V$, via the construction of $U$. 
Therefore, $\widehat Y^\bQ$ is not a minimizer  to the dual problem with the value function $V$ under the probability measure $\bQ$  at $y=1$, that is to 
\be\label{6261}
\inf\limits_{\tilde Y\in \cY^{\bQ} (1)}\bE_{\bQ}\[ V(\tilde Y_T)\].
\ee
The construction of $V$ implies that $V\in \cF\cV(\bQ)$. Therefore, by \cite[Theorem 3.2]{MostovyiNec}, there exists a unique minimizer to \eqref{6261}. Let us denote it by $\bar Y$. Then, from the respective optimality of $\bar Y$ and $\widehat Y^{\bQ}$ to minimization problems associated with the value functions $V$ and $\tilde V$, accordingly, we have
$$\bQ\[ \widehat Y^{\bQ}_T > \bar Y_T\] >0  \quad {\rm and}\quad \bQ\[\bar Y_T > \widehat Y^{\bQ}_T\] >0.$$ 
Let us consider a bounded contingent claim $f$ given by 
$$f : = \min\(1, -\tilde V'(\widehat Y^{\bQ}_T), -V'(\bar Y_T)\)1_{\{\widehat Y_T>\bar Y_T\}}.$$
Under the measure $\bQ$, let us consider the primal problems with the utilities $\tilde U$ and $U$, respectively. The construction of $\tilde U$ and $U$ implies that \eqref{finValueU} holds for both primal value functions. Then, with conjugates stochastic fields $\tilde V$ and $V$, let us consider the dual problems and denote the dual value functions by $\tilde v$ and $v$, respectively. One can see that \eqref{finValueV} holds for both dual value functions. In view of \eqref{finValueU}  and \eqref{finValueV}, the results of \cite[Theorem 3.2]{MostovyiNec} apply, and therefore $\tilde v$ and $v$ are differentiable. and the initial wealth $-\tilde v'(1)$ and $-  v'(1)$. %These value functions are finite-valued and differentiable on $(0,\infty)$, by \cite[Theorem 3.2]{MostovyiNec}.
Then, $0\leq f\leq -\tilde V'(\widehat Y^{\bQ}_T)$ and  $0\leq f\leq -V'(\bar Y_T)$, where $-\tilde V'(\widehat Y^{\bQ}_T)$ and $-V'(\bar Y_T)$ are the terminal values of the optimal wealth processes, which are maximal in $\cX(-\tilde v'(1))$ and $\cX(- v'(1))$, respectively. Therefore, %similarly to the proof of \cite[Theorem 4.2]{MostovyiPietroLMP}, 
one can show that the indifference prices of $f$ associated with $\tilde U$ and $U$ are given by 
$$\Pi(f, -\tilde v'(1), \tilde U, \bQ) = \bE_{\bQ}\[ \widehat Y^{\bQ}_Tf\] \quad {\rm and}\quad \Pi(f, -  v'(1),   U, \bQ)=\bE_{\bQ}\[ \bar Y_Tf\].$$
  Furthermore, from the construction of $f$, we get
$$0<\bE_{\bQ}\[(\widehat Y^{\bQ}_T - \bar Y_T)f\] =  \Pi(f, -\tilde v'(1), \tilde U, \bQ) - \Pi(f, -  v'(1),   U, \bQ).$$
Therefore, the indifference price invariance fails in the sense that \eqref{eqIPI} does not hold. This is a contradiction. We conclude that the model is complete. 
\end{proof}
%\subsubsection{Weak anticipation}

\subsection{Connection to the results in \cite{Fabrice03}}

A large class of examples where our results apply and are relevant is given by the financial markets for weakly informed insiders, as in \cite{Fabrice03}. We first recall the framework of \cite{Fabrice03}.

Let $T > 0$ be a constant finite time horizon. In this section, we work on a continuous-time, arbitrage-free financial market. Namely, let $(\Omega, (\mathcal{F}_t)_{0 \le t \le T} , \mathbb P)$ be a filtered probability space that satisfies the usual conditions (i.e., the filtration $\mathcal{F}$ is complete and right-continuous) and such that, moreover, the filtration $\mathcal{F}$ is quasi-left-continuous. We assume that there are $d$ basic tradable assets, the price process of which is am $\mathcal{F}$-adapted positive local martingale $(S_t)_{0 \le t \le T}$ . In addition, we assume that $S$ is square-integrable and that $\mathcal F_0$ is trivial, which implies that $S_0$ is constant.
We assume that $(\Omega, (\mathcal{F}_t)_{0 \le t \le T} , \mathbb P)$ is complete in the sense that $(S_t)_{0 \le t \le T}$ enjoys the predictable representation property.

%As in section \ref{secWeakAntip}, 
Let us consider an $\mathcal{F}_T$-measurable random variable $\sf Y$ %, which can also be called a random element,
 taking values in a Polish space $\cP$, where $\cP$ is endowed with its Borel $\sigma$-algebra $\cB(\cP)$. Given a probability measure $\nu$ on $\(\cP, \cB(\cP)\)$, we consider the  set $\mathcal{E}^\nu$ of probability measures on $(\Omega,\mathcal{F}_T)$ such that:
%so that $\sf Y$ can also be called a random element.  
 \begin{enumerate}
\item $\mathbb{Q}$ is equivalent to $\mathbb{P}$;
\item The distribution of $\sf Y$ under $\mathbb Q$ is $\nu$.
\end{enumerate}
We assume that $\mathcal{E}^\nu$ is not empty, which reduces to the assumption that $\nu$ is equivalent to the distribution of $\sf Y$ under $\mathbb P$.
We recall that the space $\mathcal{A}_F(S)$  of admissible strategies is the space of $\mathbb{R}^d$-valued and $\mathcal{F}$-predictable processes $\Theta$ integrable with respect to the price process $S$ such that $  \Theta  \cdot  S $ is a $(\mathbb P, \mathcal F)$ martingale.

Let us consider a deterministic Inada utility function $U$, that is, the one satisfying Assumption \ref{Inada} and which does not depend on $\omega$. The value of the information $(\sf Y,\nu)$ for an insider  with initial wealth $x>0$ was  defined in \cite{Fabrice03} as:
\begin{align*}
u(x,\nu)=\inf_{\mathbb{Q} \in \mathcal{E}^\nu} \sup_{\Theta \in \mathcal{A}_F(S)} \mathbb{E}^{\mathbb Q} \left[ U \left(x+  \Theta  \cdot  S_T\right)\right],
\end{align*}
that is $u(x,\nu)$ is the minimal gain in the utility associated with the anticipation $(\sf Y, \nu)$. %One can see that $u(x,\nu)\geq U(x)$. 

We have the following result.

\begin{prop}
Let $\nu_n$, $n\in\bN$, be a sequence of probability measures on $\(\cP, \cB(\cP)\)$ equivalent to $\nu$ and such that 
$$\nu\text-\lim\limits_{n\to\infty} \frac{d\nu_n}{d\nu}  = 1.$$
Let $\mathbb P^{n} := \frac{d\nu_n}{d\nu} (Y) \cdot \mathbb{P}^{\nu}$, $n\in\bN$, $\bP^\infty := \bP^\nu$, the assumptions of Proposition  \ref{propconvuv} hold, and the financial market above is complete. Then, for every $x >0$, we have 
$$
\lim_{n \to  \infty} u(x,\nu_n)=u(x,\nu).
$$
Therefore, the value of the weak information in the sense of \cite{Fabrice03} is continuous in the topology of the total variation distance.
\end{prop}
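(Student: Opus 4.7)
The plan is to reduce the continuity assertion to an application of Proposition \ref{propconvuv}. First, I would invoke the representation result from \cite{Fabrice03}, valid in the present complete-market setting, which identifies the value of the weak information with the standard expected-utility maximization under the associated minimal probability measure. Concretely, for each $m\in\bN^*$ (with $m=\infty$ corresponding to $\nu$), this gives
\begin{equation*}
u(x,\nu_m) \;=\; \sup_{\Theta\in\mathcal{A}_F(S)}\mathbb{E}^{\mathbb{P}^m}\bigl[U(x+\Theta\cdot S_T)\bigr] \;=\; u_m(x),
\end{equation*}
where $u_m$ is the value function in \eqref{primalProblem} with base measure $\mathbb{P}^\nu$ and Radon--Nikodym density $Z^m_T$. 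A direct computation verifies that $\mathbb{P}^m$, as defined in the statement, coincides with the minimal measure $\mathbb{P}^{\nu_m}$ of the weak information $(\mathsf{Y},\nu_m)$, so this step collapses the sup-inf in the definition of $u(x,\nu_m)$ and turns the claim into the pointwise convergence $u_n(x)\to u_\infty(x)$.

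Second, I would verify that Assumption \ref{asZconv} is in force for the densities $Z^n_T := \tfrac{d\mathbb{P}^n}{d\mathbb{P}^\nu} = \tfrac{d\nu_n}{d\nu}(\mathsf{Y})$. Under $\mathbb{P}^\nu$, the element $\mathsf{Y}$ has law $\nu$, so the pushforward identity
\begin{equation*}
\mathbb{P}^\nu\bigl(|Z^n_T-1|>\varepsilon\bigr) \;=\; \nu\Bigl(\bigl|\tfrac{d\nu_n}{d\nu}-1\bigr|>\varepsilon\Bigr), \qquad \varepsilon>0,
\end{equation*}
together with the hypothesis $\nu\text{-}\lim_n \tfrac{d\nu_n}{d\nu}=1$, yields $\mathbb{P}^\nu\text{-}\lim_n Z^n_T = 1$, which is exactly Assumption \ref{asZconv}.

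Third, the remaining hypotheses of Proposition \ref{propconvuv} are in force on the filtered space $(\Omega,(\mathcal{F}_t),\mathbb{P}^\nu)$: Assumption \ref{Inada} holds since $U$ is a deterministic Inada utility; \eqref{noArb} follows from $S$ being a positive local martingale under $\mathbb{P}$ (hence under $\mathbb{P}^\nu$) together with the completeness assumption; and Assumptions \ref{asUI1}, \ref{asUI} are postulated. A direct application of Proposition \ref{propconvuv} therefore delivers $\lim_n u_n(x) = u_\infty(x)$, which, combined with the first-step identification $u(x,\nu_m)=u_m(x)$, gives $\lim_n u(x,\nu_n)=u(x,\nu)$, as desired.

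The main obstacle is the first step: locating and rigorously invoking the representation theorem in \cite{Fabrice03} which, under completeness, shows that the worst-case measure in $\mathcal{E}^{\nu_m}$ coincides with the minimal measure $\mathbb{P}^m$ itself (so that the sup-inf collapses to the sup under $\mathbb{P}^m$). Once that reduction is in place, the rest of the argument is a direct transcription of Proposition \ref{propconvuv} and requires no additional input beyond the pushforward computation of Assumption \ref{asZconv}.
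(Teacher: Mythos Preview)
Your proposal is correct and follows essentially the same approach as the paper: reduce $u(x,\nu_n)$ to the expected-utility value $u_n(x)$ under the minimal measure $\mathbb{P}^n$ via Baudoin's representation, then apply Proposition~\ref{propconvuv}. The paper pins the representation step to Proposition~1 and Theorem~1 of \cite{Fabrice04} (rather than \cite{Fabrice03}), and skips the explicit verification of Assumption~\ref{asZconv} since it is already subsumed in the hypothesis ``the assumptions of Proposition~\ref{propconvuv} hold''; your pushforward argument for it is correct but, strictly speaking, not needed here.
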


\begin{proof}
It follows from Proposition 1 and Theorem 1 in \cite{Fabrice04} that
$$
u(x,\nu_n)=\sup_{\Theta \in \mathcal{A}_F(S)} \mathbb{E}^{\mathbb P^{n}} \left[ U \left(x+  \Theta  \cdot  S_T\right)\right].
$$
 Now, the assertions follow from Proposition \ref{propconvuv}.
\end{proof}

\subsection{Incomplete models} We begin with the definition of replicability. 
\begin{defn}\label{defRepl} 
%\tb{A bounded random variable $f$ is replicable if there exists a wealth process $X\in\bigcup\limits_{x>0}\cX(x)$ and nonnegative constants $x_0$ and $\tilde x_0$, such that %\footnote{We define also $x_0$ in order to include the possibility for $f$ to take negative values, as the definition of $\cX$'s in \eqref{defX} only deals with nonnegative wealth processes.} 
%$x_0 - X\in\bigcup\limits_{x>0}\cX(x)$ and 
%$X_T = f + \tilde x_0$.
%}
A bounded random variable $f$ is {\it replicable} if there exists an admissible wealth process $X$, such that $-X$ is also admissible and  $X_T = f$. 
\end{defn}
Next, we show that in incomplete markets, bounded {\it replicable} contingent claims are indifference price invariant.

\begin{lem}
\label{lemRepl}
Let us suppose that a separating measure exists for $S$. Then, for every $\bQ\sim\bP$, $x>0$, and $U\in\cF\cV(\bQ)$, the indifference price for every bounded replicable contingent claim $f$ satisfies
$$\Pi(f, x, U, \bQ) = x_0,$$
where $x_0$ is the initial value of the replicating strategy for $f$.
\end{lem}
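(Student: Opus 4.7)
The plan is to prove both that $\Pi := x_0$ satisfies the defining inequality in Definition \ref{defIndPrice} and that no other value does. The core construction is, given any $q \in \bR$ and any $X \in \cX(x - qx_0, q)$, to consider the shifted process $\tilde X := X + qX^f$, where $X^f = x_0 + H^f \cdot S$ is the replicating strategy for $f$. Since both $X^f$ and $-X^f$ are admissible, $X^f$ is bounded in $[-C, C]$ for some constant $C > 0$, so $\tilde X = x + (H + qH^f) \cdot S$ is self-financing, starts at $x$, is admissible (bounded below by $-x'(X) - |q| C$), and satisfies $\tilde X_T = X_T + qf \geq 0$.

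To show that $\bE_\bQ[U(X_T + qf)] \leq u(x)$, I would dominate $\tilde X_T$ by the terminal value of an element of $\cX(x)$. For any separating measure $\bQ^* \sim \bP$ for $S$, the admissible process $\tilde X$ is a $\bQ^*$-supermartingale, so $\bE_{\bQ^*}[\tilde X_T] \leq x$. Since $\tilde X_T \geq 0$, the superhedging / optional decomposition theorem (available under the existence of a separating measure) yields a nonnegative wealth process $X^{**} \in \cX(x)$ with $X^{**}_T \geq \tilde X_T$. The monotonicity of $U$ then gives
\[
\bE_\bQ[U(X_T + qf)] = \bE_\bQ[U(\tilde X_T)] \leq \bE_\bQ[U(X^{**}_T)] \leq u(x),
\]
which verifies the defining inequality for $\Pi = x_0$.

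For uniqueness, I would use that $U \in \cF\cV(\bQ)$ guarantees a unique primal optimizer $\widehat X \in \cX(x)$ via \cite[Theorem 3.2]{MostovyiNec}. If $\Pi' < x_0$, setting $q = 1$ and $X := \widehat X - X^f + (x_0 - \Pi')$ produces an admissible wealth process in $\cX(x - \Pi', 1)$ with $X_T + f = \widehat X_T + (x_0 - \Pi') > \widehat X_T$; the strict monotonicity of $U$ then gives $\bE_\bQ[U(X_T + f)] > u(x)$, violating the indifference inequality. A symmetric construction with $q = -1$ and $X := \widehat X + X^f + (\Pi'' - x_0)$ rules out $\Pi'' > x_0$.

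The main obstacle will be the domination step — namely justifying the passage from the admissible process $\tilde X$ (which may take negative values on $[0,T)$) to a nonnegative wealth process $X^{**} \in \cX(x)$ with $X^{**}_T \geq \tilde X_T$. One needs to invoke the appropriate form of the optional decomposition theorem valid under the mere existence of a separating measure. A purely dual approach via the Fenchel inequality with a density process of a separating measure is tempting, but runs into the subtlety that a general element of $\cY^\bQ(y)$ need not be a $\bQ$-martingale, so the bound $\bE_\bQ[\tilde X_T Y_T] \leq xy$ is not automatic after shifting $\tilde X$ by a constant to make it nonnegative; the superhedging route is therefore the cleanest.
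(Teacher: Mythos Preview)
Your proof is correct and follows essentially the same route as the paper: both arguments hinge on the shift $X \mapsto X + qX^f$, which identifies $\cX(x - qx_0, q)$ with $\cX(x, 0)$ and therefore yields $u(x - qx_0, q) = u(x, 0)$; the paper then invokes strict monotonicity of $u(\cdot, 0)$ to rule out any $\tilde x \neq x_0$, while you produce the violating strategies by hand using the primal optimizer $\widehat X$. The one place you are more careful is in explicitly invoking superhedging/optional decomposition to pass from the admissible process $\tilde X$ (with $\tilde X_T \geq 0$) to an element of $\cX(x)$, i.e., to justify $u(x,0) = u(x)$; the paper uses this identification tacitly when it cites \cite[Theorem 3.2]{MostovyiNec} for the strict monotonicity of $u(\cdot,0)$ and when it reads off the indifference price from $u(x - qx_0, q) = u(x,0)$, so your added detail is appropriate rather than superfluous.
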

\begin{proof}
Let us fix a bounded and replicable at the initial price $x_0$ contingent claim $f$. %Then, we have 
%$$f = x_0 + H\cdot S_T,\quad \Pas,$$
%for some predictable and $S$\text-integrable process $H$. 
Recalling the definition of sets $\cX(x,q)$'s in \eqref{defXxq}, let us set
$$u(x,q) := \sup\limits_{X\in \cX(x,q)}\bE_{\bQ}\[U(X_T + qf) \], \quad (x,q)\in\bR^2,$$
where we use the convention as in \eqref{convu}, and here, if for some $(x,q)\in\bR^2$, $\cX(x,q)=\emptyset$, we set $u(x,q): = -\infty$. 

Next, for every $\tilde x\neq x_0$, let us consider $q = sign(x_0 - \tilde x)$, and we have 
\be\label{1304}
u(x - q\tilde x, q) = u(x - q\tilde x + qx_0, 0) > u(x,0), \quad x>0,\ee
as $u(\cdot, 0)$ is {\it strictly} increasing by \cite[Theorem 3.2]{MostovyiNec}. We deduce from Definition \ref{defIndPrice} %\eqref{defIP}
 and \eqref{1304}  that $\tilde x$ is not an indifference price.
In turn, for $\tilde x = x_0$, we have
$$u(x - q\tilde x, q) = u(x - q\tilde x + x_0 q, 0) = u(x,0),\quad q\in\bR,$$
and so, by Definition \ref{defIndPrice}, $x_0$ is an indifference price for $f$. 
\end{proof}

%Now, \tr{we switch to non-replicable claims....}
%
%\tr{Stochastic dominance does not seem to work;(}
%
%the main part of this section should be showing that for every non-replicable contingent claim, there exists a probability measure $\bQ\sim\bP$, such that 
%$$\Pi(f, x, U, \bP) = \Pi(f, x, U, \bQ)$$
%for appropriate $x$ and $U$ and that  
%there exists a probability measure $\tilde \bQ\sim\bP$, such that 
%$$\Pi(f, x, U, \bP) \neq \Pi(f, x, U, \tilde \bQ)$$
%hopefully for the same $x$ and $U$. This will show that every contingent claim $f$ can be price invariant or not, and the space of the price invariant contingent claims is a subspace of $\bL^0(\Omega, \cF, \bP)$ that includes all replicable claims. Possibly, state a one-period trinomial example and stop.

%The following example contains a  construction of a non-replicable claim, which is not indifference price invariant under a change of the probability measure. This shows that the set of indifference price invariant contingent claims can be strictly bigger that the set of replicable claims. 

The following example gives an incomplete model, where the set of contingent claims, whose indifference price does not depend on the reference probability measure, is exactly the set of replicable contingent claims. 
\begin{exa}\label{exTrin}
Let us consider a one-period trinomial model on a probability space, where $\Omega = \{\omega_i\}_{i = 1}^3$, $\cF_0$ is trivial, and $\cF_1$ is the discrete $\sigma$-field on $\Omega$. Let us suppose that there is a risky asset 
$S$ that satisfies $S_0 = 1$ and $S_1(\omega_i) = s_i$, $i = 1, 2, 3$, where $s_1>s_2>s_3>0$ are constants. 
Let us suppose that there is also a riskless security whose price is equal to $1$ at both times $0$ and $1$. Supposing that $\bP(\omega_i)>0$, $i = 1,2,3$, results in a very simple example of an {\it incomplete} model. 

Let us fix any utility function or stochastic field satisfying Assumption \ref{Inada}, and let its convex conjugate be denoted by $V$. Let us introduce the constraints that every absolutely continuous martingale measure for $S$ satisfies. 
\be\label{249}\bs
&\sum\limits_{i = 1}^3\bQ(\omega_i) = 1, \quad \bQ(\omega_j)\geq 0, j= 1,2,3, \quad {\rm and}\quad \bE_{\bQ}\[S_1\]  = S_0 = 1.
\end{split}\ee
Then, for every $y>0$, the dual minimization problem amounts to solving the following constrained minimization problem over vectors describing probability measures $\bQ$ as in \eqref{249}.
\be\label{241}
\bs
 &\min%\limits_{\{q_i\}_{i = 1, 2, 3}}
\bE\[V\(y\frac{d\bQ}{d\bP}\)\], \\
 &{\rm subject~to~}\eqref{249}. 
\end{split}\ee
Then, one can show that the minimizer exists and is unique for every $y>0$. Let us denote it by $\widehat\bQ(y)$, $y>0$. 

Next, let us consider a different probability measure $\tilde \bP\sim\bP$ and possibly a different utility stochastic field satisfying Assumption \ref{Inada}, whose convex conjugate is denoted by $\tilde V$. Then, for every $y>0$, the dual problem analogous to \eqref{241}  can be written as 
\be\label{242}
\bs
&\min%\limits_{\{q_i\}_{i = 1, 2, 3}}
\tilde\bE\[\tilde V\(y\frac{d\bQ}{d\tilde \bP}\)\], \\
&{\rm subject~to~} \eqref{249}.\\ 
%&\sum\limits_{i = 1}^3\bQ(\omega_i) = 1, \quad \bQ(\omega_j)\geq 0, j= 1,2,3, \quad {\rm and}\quad \bE_{\bQ}\[S_1\]  = S_0 = 1.
\end{split}\ee
Let us denote the unique minimizer to \eqref{242} by $\tilde \bQ (y)$, $y>0$. 

Let us consider a contingent claim $f$ on this probability space. The indifference price invariance amounts to verifying whether the following equality holds or not.
\be\label{243}
\bE_{\hat\bQ(y_1)}\[f\] = \bE_{\tilde \bQ(y_2)}\[f\],
\ee
where $y_1$ and $y_2$ are positive constants, which are the derivatives of the corresponding primal value functions at some $x_1>0$ and $x_2>0$. Let us suppose that these $x_i$'s and, therefore, $y_i$'s are fixed and drop $y_i$'s in \eqref{243} for simplicity of notations. One can see that the constraints \eqref{249} for  \eqref{241} and \eqref{242} lead to 
$$\widehat\bQ(\omega_3) = \frac{s_2 - 1 + \widehat\bQ(\omega_1)\(s_1 - s_2\)}{s_2 - s_3}\quad {\rm and} \quad 
\widehat\bQ(\omega_2) = \frac{s_3 - 1 + \widehat\bQ(\omega_1)\(s_1 - s_3\)}{s_3 - s_2},
$$
as well as to 
$$\tilde \bQ(\omega_3) = \frac{s_2 - 1 + \tilde \bQ(\omega_1)\(s_1 - s_2\)}{s_2 - s_3}\quad {\rm and} \quad 
\tilde \bQ(\omega_2) = \frac{s_3 - 1 + \tilde \bQ(\omega_1)\(s_1 - s_3\)}{s_3 - s_2}.
$$
Therefore, \eqref{243} leads to the following description of the indifference price invariant $f = f(\omega_i)$, $i = 1,2,3$. 
\be\label{245}
(s_3 - s_2)f(\omega_1) + (s_1 - s_3)f(\omega_2) + (s_2-s_1)f(\omega_3) = 0,
\ee
which are represented by the set of vectors in $\bR^3$ that are orthogonal to $(s_3 - s_2, s_1 - s_3, s_2-s_1)^\top$. In turn, the replicable claims in this model are the ones given by 
\be\label{246}
f = \alpha S_1 + \beta,\quad (\alpha,\beta)\in\bR^2.
\ee
 One can see that \eqref{245} and \eqref{246} specify the same class of random variables. Thus, the class of contingent claims whose indifference price depends neither on the physical probability measure nor the initial wealth nor the utility stochastic field in the class $\cF\cV$ (under the associated physical measure) is precisely the class of replicable claims.
% \be\label{defBarf}
% \bar f (\omega):= (s_1 - s_2)1_{\{\omega = \omega_1\}}+  (s_3 - s_2)1_{\{\omega = \omega_3\}}, \quad \omega\in\Omega,
% \ee
%  satisfies \eqref{245}, but not \eqref{246} in the sense that there are no $(\alpha,\beta)\in\bR^2$ such that \eqref{246} holds for $f=\bar f$. 
% 
% We conclude that  $\bar f$ in \eqref{defBarf} gives an example of a non-replicable contingent claim that is  indifference price invariant with respect to a change of the probability measure $\bP$ as above.  \tr{On the other hand, as follows from \eqref{245}, the indifference price of a contingent claim
% $$\tilde f: = (s_3 - s_2)1_{\{\omega = \omega_1\}}+ (s_1 - s_3)1_{\{\omega = \omega_2\}}+( s_2-s_1)1_{\{\omega = \omega_3\}}, \quad \omega\in\Omega,$$
% depends on the change of the physical probability measure $\bP$.}

\end{exa}
%The conclusion of the previous example and Lemma \ref{lemRepl} is that the set of indifference price invariant contingent claims with respect to a perturbations of the physical probability measure $\bP$ contains all replicable claims and can be strictly bigger.

In the previous example, indifference price invariant claims are exactly the replicable ones. In general, the class of indifference price invariant (with respect to changes in the physical probability measure) claims can be strictly bigger than the class of replicable ones. 

The next example gives an incomplete model, where {\it every} bounded contingent claim is indifference price invariant if we restrict ourselves to {\it deterministic} utilities,  with respect to a particular class of changes of the physical probability measure $\bP$.

\begin{exa}\label{exSD}
Let us suppose that $B$ and W are independent Brownian motions on the complete stochastic basis $\(\Omega, \cF, (\cF_t)_{t\in[0,T]}, \bP\)$, where the   filtration $(\cF_t)_{t\in[0,T] }$ is generated by $B$ and $W$ and made right-continuous and complete, $T\in(0,\infty)$ is a time horizon. Let us suppose that there is a riskless asset with a price process equal to $1$ at all times, and there is one risky traded asset whose price process is given by 
$$\frac{dS_t}{S_t} = \mu dt + \sigma dW,\quad t\in(0,T], \quad S_0 = 1,$$
where $\mu$ and $\sigma>0$ are constants. Let us introduce the minimal martingale measure 
$\bQ$, whose Radon-Nikodym derivative with respect to $\bP$ is given by 
\be\label{defQ}
\frac{d\bQ}{d\bP} = \cE\(-\frac{\mu}{\sigma}W\)_T.
\ee
Let us consider a deterministic Inada utility function (that is, the one satisfying Assumption \ref{Inada} and which does not depend on $\omega$), whose convex conjugate $V$ satisfies 
\be\label{2411}
\bE\[V\(y\frac{d\bQ}{d\bP}\)\]<\infty, \quad y>0.
\ee
Following %\cite[Example 8.5]{MostovyiSirbuModel}
\cite[Example 4.4]{MostovyiSirbuThaleia}, one can show that the density of $\bQ$ is the dual minimizer for every $y>0$.
 
Next, let us change the probability measure $\bP$ to $\tilde \bP$ in a way that the dynamics of $S$ under $\tilde \bP$ is given by 
$$\frac{dS_t}{S_t} = \tilde\mu dt + \sigma dW^{\tilde \bP},\quad t\in(0,T], \quad S_0 = 1,$$
where $\tilde \mu\in\bR$ and $ W^{\tilde \bP}_t = W_t + \frac{\mu - \tilde \mu}{\sigma} t$, $t\in [0,T]$, is a Brownian motion under $\tilde \bP$.  Then, for every {\it deterministic} Inada utility function $\tilde U$, whose convex  conjugate $\tilde V$ satisfies
\be\label{2412}
\tilde\bE\[\tilde V\(y\frac{d\bQ}{d\tilde\bP}\)\]<\infty, \quad y>0,
\ee
one can show that the density of $\bQ$ (defined via \eqref{defQ}) with respect to $\tilde \bP$ is the dual minimizer. 

Now, for every $f\in\bL^\infty(\Omega, \cF, \bP)$, replicable or not, where an example of a bounded nonreplicable contingent claim is $f = 1_{\{B_T>0\}}$, we have
\be\label{2415}
\Pi\(f, x, U, \bP\) = \bE_{\bQ}\[f\] = \Pi(f, \tilde x, \tilde U, \tilde \bP), 
\ee
where $U$ and $\tilde U$ are arbitrary deterministic Inada utility functions, whose convex conjugates $V$ and $\tilde V$  satisfy \eqref{2411} and \eqref{2412}, respectively, $x$ and $\tilde x$ are positive constants, $\bQ$ defined via \eqref{defQ}, and $\bP$ and $\tilde \bP$ are the probability measures specified in this example above. We note that \eqref{2411} and \eqref{2412} are adaptations of condition \eqref{finValue} to the present setting of two probability measures in the case of deterministic utilities $U$ and $\tilde U$.%\footnote{\tb{The term  ``correspond'' can strengthened to the term ``equivalent'' (to \eqref{finValue}) with a slight change of notations.}}

We conclude that every $f\in\bL^\infty(\Omega, \cF, \bP)$ is indifference price invariant, in the sense \eqref{2415}, with respect to a change of probability measure from $\bP$ to $\tilde\bP$ that is specified in this example above.   
\end{exa}
%\begin{rem}\tr{
%The indifference price invariance  in the last example over deterministic Inada utilities in a stochastically dominant model with the maximal (with respect to the second-order stochastic dominance) Radon-Nikodym derivative of an equivalent martingale measure $\eta$ can be extended to any perturbation of the physical probability measure $\bP$ with the Radon-Nikodym derivative $\zeta$, such that $\sigma(\eta) = \sigma\(\frac \eta\zeta\)$.}
%\end{rem}
\section{Counterexamples}\label{secCounter}
\begin{exa}
This example shows that without Assumption \ref{asUI1}, stability may fail, in particular, the assertions of Proposition \ref{propconvuv}  might not necessarily hold. 

Let us fix a probability space $(\Omega, \cF, \bP)$, where the filtration $(\cF_t)_{t\in[0,1]}$ is the usual augmentation of the filtration generated by  a Brownian motion $W$ and $1$ is the time horizon. Let us suppose that there are two traded assets one riskless with price process equal to $1$ at all times and one risky, whose dynamics is given by %\footnote{This form of the stock price dynamics can be replace by any dynamics allowing for NUPBR.}  
$$\frac {dS_t}{S_t} = \mu dt + \sigma d W_t, \quad t\in(0,T],\quad S_0 = 1,$$
where $\mu$ and $\sigma>0$ are constants. Let us set  
\be\label{251}
\phi := \sum\limits_{n = 1}^\infty\frac 1{2^n} \sqrt{\frac 2 n}\exp\({\(\frac 12 - \frac 1n\)W^2_1}\).
\ee
Then, one can see that $\phi\in\bL^1(\bP)$. Now, let us consider a utility stochastic field $$U(\omega, x) = \tilde U(x) - \phi(\omega),\quad (\omega, x)\in\Omega\times[0,\infty),
$$
where $\tilde U$ is an Inada deterministic utility function. We  suppose that $\tilde U$ is negative-valued, e.g., $\tilde U(x) = \frac {x^p}{p}$, $p<0$. One can see that the base model satisfies \eqref{finValueU} and \eqref{finValueV}, as the convex conjugate of $U$, $V$, is non-positive-valued. 

Next, let us consider perturbations of the probability measure $\bP$, such that the Radon-Nikodym derivatives of the new probability measures with respect to $\bP$ are given by $$Z^n_1 = %\sqrt{ {1-\tfrac 2{n+2}}}
\sqrt{ \tfrac n{n+2}}\exp\(\tfrac 1{n+2}W^2_1\), \quad n\in\bN.$$
Then, one can see that $\bE[Z^n_1]=1$, $n\in\bN$, and $\bP{\text-}\lim\limits_{n\to\infty}Z^n_1 = 1$, and so Assumption \ref{asZconv} holds.  %Furthermore, from \eqref{251}, one can see that $\phi\in\bL^1(\bP)$. 

On the other hand, by direct computations, we have
$$\bE[Z^n_1 \phi] %= \frac 1{\sqrt{2\pi}}\int_\bR \exp\( -\frac {z^2}2 - \tr{\dots}\)dz
 = \infty,\quad n\in\bN,$$
and so, for every $n\in\bN$ and every  wealth process $X\in\bigcup\limits_{x\geq 0}\cX(x)$, we have
\be\label{252}\bs
&\bE\[ Z^n_TU^-(X_T)\] = -\bE\[Z^n_T U(X_T)\] \\&= \bE\[ Z^n_T\(\tilde U(X_T) + \phi\)\]\geq \bE\[ Z^n_T  \phi\] = \infty.
\end{split}
\ee
Therefore, {\it Assumption \ref{asUI1} does not hold}
and, in view of the convention \eqref{convu}, from \eqref{252}, we have  \be\label{2421}u_n(x) = -\infty, \quad x>0,\quad n\in\bN.\ee
On the other hand, as $\phi\in\bL^1(\bP)$, we have that $u_\infty(x)\in(-\infty, 0)$, $x>0$, which together with \eqref{2421} implies that the convergence of the primal value functions in the sense of Proposition \ref{propconvuv}, item $(2)$, fails. Moreover, \eqref{2421} shows that the finiteness of $u_n$, $n\in\bN$, as in Proposition \ref{propconvuv}, item $(1)$, fails too. 
\end{exa}
As pointed out in Remark \ref{remDetU}, Assumption \ref{asUI1} holds under Assumption \ref{asZconv} if $U$ is deterministic. The next example shows that without Assumption \ref{asUI}, the assertions of Proposition \ref{propconvuv} may fail.

\begin{exa}
Let us consider the probability space and the financial model consisting of two assets, as in the previous example. Let us suppose that the preferences of an economic agent are modeled by a deterministic utility function $U(x) = \frac {x^p}p$, $x>0$, where $p\in\(\tfrac 12,1\)$. Then, the convex conjugate of $U$, $V$, is given by 
$V(y) = \frac {y^{-q}}q$, $y>0$, where $q = \frac{p}{1 - p}>1.$ Under the probability measure $\bP$, the unique minimizer to the dual problem, for every $y>0$, is given by $y\frac{d\bQ}{d\bP}$, where $$\frac{d\bQ}{d\bP} = \cE\(-\frac\mu\sigma W\)_1.$$
In particular, the dual value function for the base model is finite-valued, as $\exp\(q \frac \mu\sigma W_1\)\in\bL^1(\bP)$, that is 
\be\label{254}
v_\infty(y)<\infty,\quad y>0.
\ee
Now, with 
$$c_n := \bE\[\exp\(-\frac 1n W^3_11_{\{W_1\geq 0\}}\)\], \quad n\in\bN,$$
let us consider  $$Z^n_1 = \frac 1{c_n}\exp\(-\frac 1n W^3_11_{\{W_1\geq 0\}}\), \quad n\in\bN.$$
Then, one can see that Assumption \ref{asZconv} holds. 
However, we get 
\be\label{256}\bs
&\bE\[ Z^n_1 V^+\(\frac{y \frac {d\bQ}{d\bP}}{Z^n_1}\)\] = \bE\[ Z^n_1 V\(\frac{y \frac {d\bQ}{d\bP}}{Z^n_1}\)\] = 
\bE\[ Z^n_1 \frac{1}{q}\(\frac{y \frac {d\bQ}{d\bP}}{Z^n_1}\)^{-q}\]\\
&=\frac{\exp\(\frac {\mu^2 q}{2\sigma^2}\)y^{-q}}q \bE\[ \(Z^n_1\)^{1-q} \exp\(q\frac \mu\sigma W_1\)\] 
 \\
&= \frac{\exp\(\frac {\mu^2 q}{2\sigma^2}\)y^{-q}}{qc^{1-q}_n} \bE\[  \exp\(\frac {q-1}n W^3_11_{\{W_1\geq 0\}} + q\frac \mu\sigma W_1\)\] = \infty.
\end{split}\ee
This implies that Assumption \ref{asUI} does not hold. 
Therefore, in view of the convention \eqref{convv}, using \eqref{256}, we deduce that \be\label{255}
v_n(y) = \infty, \quad y>0, \quad n\in\bN,
\ee
which implies that the finiteness of the dual value functions, as in Proposition \ref{propconvuv}, item $(1)$, does not hold. Furthermore, from \eqref{254} and \eqref{255}, we conclude that the  convergence of the dual value functions in Proposition \ref{propconvuv}, item $(2)$, also does not hold. 
\end{exa}
\bibliographystyle{alpha}\bibliography{literature}
\end{document}